\documentclass[11pt,a4paper]{article}

\usepackage{newunicodechar}

\newunicodechar{┌}{+}

\newunicodechar{┐}{+}

\newunicodechar{└}{+}

\newunicodechar{┘}{+}

\newunicodechar{─}{-}

\newunicodechar{│}{|}

\usepackage[a4paper, top=28mm, bottom=28mm,
            left=30mm, right=30mm]{geometry}
\usepackage{setspace}
\usepackage{microtype}
\usepackage{amsmath, amssymb, amsthm, bm}
\usepackage{mathtools}

\usepackage{booktabs}
\usepackage{array}
\usepackage{longtable}
\usepackage{xcolor}
\usepackage{graphicx}
\usepackage{listings}

\lstdefinestyle{scriptlog}{
  basicstyle=\ttfamily\scriptsize,
  breaklines=true,
  breakatwhitespace=false,
  frame=single, framerule=0.4pt, rulecolor=\color{gray!40},
  numbers=none,
  showstringspaces=false,
  tabsize=2,
  keepspaces=true,
  columns=flexible,
  inputencoding=utf8,
  extendedchars=true,
}

\lstdefinestyle{python}{
  language=Python,
  basicstyle=\ttfamily\small,
  keywordstyle=\color{blue!70!black}\bfseries,
  commentstyle=\color{green!50!black}\itshape,
  stringstyle=\color{red!60!black},
  numbers=left, numberstyle=\tiny\color{gray},
  numbersep=8pt,
  frame=single, framerule=0.4pt, rulecolor=\color{gray!40},
  breaklines=true, showstringspaces=false,
  tabsize=4, extendedchars=true, inputencoding=utf8
}

\usepackage{fancyhdr}
\fancypagestyle{plain}{%
  \fancyhf{}
  
  \fancyfoot[C]{\small\thepage}}

\usepackage[colorlinks=true, linkcolor=blue!60!black,
            citecolor=green!40!black, urlcolor=blue!60!black]{hyperref}

\theoremstyle{plain}
\newtheorem{theorem}{Theorem}[section]

\newtheorem{corollary}[theorem]{Corollary}
\theoremstyle{definition}
\newtheorem{definition}[theorem]{Definition}
\newtheorem{algorithm_env}[theorem]{Algorithm}
\newtheorem{remark}[theorem]{Remark}

\newcommand{\balpha}{\boldsymbol{\alpha}}
\newcommand{\bx}{\mathbf{x}}
\newcommand{\by}{\mathbf{y}}
\newcommand{\ba}{\mathbf{a}}
\newcommand{\bv}{\mathbf{v}}

\newcommand{\bdelta}{\boldsymbol{\delta}}
\newcommand{\bzero}{\mathbf{0}}
\newcommand{\bU}{\mathbf{U}}
\newcommand{\bPhi}{\boldsymbol{\Phi}}
\newcommand{\bSigma}{\boldsymbol{\Sigma}}
\newcommand{\R}{\mathbb{R}}
\newcommand{\Z}{\mathbb{Z}}
\newcommand{\kker}{\operatorname{ker}}
\newcommand{\rank}{\operatorname{rank}}

\newcommand{\col}{\operatorname{col}}

\newcommand{\Mach}{\mathrm{Ma}}
\newcommand{\fl}{\varphi}        
\newcommand{\hc}{\psi}           

\newcommand{\keywords}[1]{%
  \smallskip\noindent\textbf{Keywords:}\enspace#1\par\smallskip}

\title{\textbf{The Algebra of Units: From Buckingham's $\Pi$ Theorem
        to Latent-Variable Learning}\\[6pt]
       \large Discovering Dimensionless Groups from Data Without Prior Physics}
\author{M.~Valorani\footnote{Corresponding Author. 
\textsl{E-mail address}: mauro.valorani@uniroma1.it} \\
\textsl{Department of Mechanical and Aerospace Engineering} \\ 
\textsl{Sapienza University of Rome}
}
\date{}

\begin{document}

\maketitle

\begin{abstract}
When engineers study how a machine or a physical phenomenon behaves,
they often measure many different quantities --- speed, pressure,
temperature, length --- all in different physical units.
A classical idea, known as the Buckingham $\Pi$ theorem (1914), says
that you can always combine these measured quantities into a smaller
set of \emph{dimensionless numbers} --- ratios in which all units
cancel --- and that the physics of the system depends only on those
ratios, not on the original measurements.
Famous examples are the Reynolds number in fluid mechanics, the Mach
number in aerodynamics, and the efficiency and flow coefficient used
to characterise fans and compressors.

Finding the right dimensionless numbers has traditionally required
expert knowledge: an engineer had to know, from first principles,
which physical quantities matter and how they are related.

This paper shows that the same dimensionless numbers can be
\emph{discovered automatically from data}, without any prior
knowledge of the underlying physics.
The key idea is simple: if you take the logarithm of your
measurements and then look at how they vary when you rescale
the experiment (for example, by running the same machine at a
different size or a different speed), the data falls onto a
low-dimensional ``manifold'' whose shape is entirely determined by
the dimensionless groups.
A standard linear algebra tool --- singular value decomposition,
the same technique used in image compression and recommendation
systems --- can identify that manifold directly.
A final step searches all integer-exponent combinations within that
manifold and retains only those built around the machine's own
characteristic scales; this \emph{repeating-variable filter} cleanly
recovers the familiar named groups (flow coefficient, head
coefficient, Mach number) and distinguishes them from equivalent
but less interpretable alternatives.

The method is demonstrated on a synthetic compressor dataset with
16\,000 measurements.
It recovers the correct dimensionless groups to numerical precision
and fits the compressor performance map with an error below
$0.01\%$, starting from raw dimensional data and no physics input.

The broader message is that two fields --- classical dimensional
analysis and modern data-driven machine learning --- are not as
separate as they appear.
They share the same underlying algebra, and making that connection
explicit opens new ways to build physical models that are both
interpretable and data-efficient.
\end{abstract}

\keywords{dimensional analysis; Buckingham $\Pi$ theorem; singular value
decomposition; low dimensional models; gauge variation; integer lattice; turbomachinery;
proper orthogonal decomposition; representation learning}

\tableofcontents
\bigskip\hrule\bigskip

\section{Introduction}
\label{sec:intro}

The Buckingham $\Pi$ theorem \cite{Buckingham1914} and Proper
Orthogonal Decomposition (POD) are usually presented in entirely
separate courses: the former in classical mechanics and fluid dynamics,
the latter in computational methods and data science.
This paper argues that the two are not merely analogous but
mathematically related, and that recognising this relationship
opens a productive bridge between classical dimensional analysis
and modern machine learning.

Both methods are, at their core, \emph{dimensionality reductions}.
The Buckingham theorem reduces $k$ dimensional variables to $k-n$
dimensionless $\Pi$ groups, discarding the $n$ degrees of freedom
associated with the choice of base units.
POD reduces an $N$-dimensional field representation to $r \ll N$
modal coordinates ($r$ denoting the number of retained modes).
The question we address is: when the data analysed by POD is physical,
what is the relationship between the POD latent coordinates and the
$\Pi$ groups?

The answer involves three ingredients.
\begin{enumerate}
  \item \textbf{The logarithmic bridge.}  A logarithmic change of
    variables linearises the multiplicative structure of dimensional
    monomials.  $\Pi$ groups become inner products of log-data with
    null-space vectors, and experimental data lies on a
    $(k{-}n)$-dimensional affine subspace of log-space whose direction
    is $\kker(\balpha^T)$.

  \item \textbf{Gauge variation and within-cluster SVD.}
    The $n$ gauge directions (associated with the choice of unit
    scales) and the $k-n$ physical directions (the $\Pi$-group
    subspace) can be separated by generating data with controlled
    \emph{gauge variation}: replicating each operating condition at
    multiple physical scales.  A within-cluster SVD then recovers
    $\kker(\balpha^T)$ to machine precision, with an exact gap in the
    singular-value spectrum at position $n$.

  \item \textbf{Integer lattice search.}
    The named $\Pi$ groups --- flow coefficient $\fl$, head
    coefficient $\hc$, Mach number $\Mach$ --- are the generators
    of the integer lattice $\kker(\balpha^T)\cap\Z^k$ that
    correspond to the standard Buckingham basis formed with $N$ and
    $D$ as repeating variables.  They cannot be recovered by
    orthogonal rotation of the SVD frame because the natural groups
    are generically non-orthogonal as exponent vectors.
    A brute-force integer lattice search, feasible for engineering
    unit systems with maximum exponent magnitude $m_{\max}\le 4$
    (see Algorithm~\ref{alg:lattice}), enumerates the full catalogue
    of primitive lattice vectors from which the named groups are identified.
\end{enumerate}

The theoretical framework is illustrated on a turbomachinery test case
designed to answer three questions simultaneously:
(i) Can $\kker(\balpha^T)$ be recovered from dimensional data without
prior knowledge of the unit structure?
(ii) How many $\Pi$ groups are needed to parameterise a given
performance map --- and can this number be detected from data?
(iii) Once the dimensionless coordinates are identified, can the map
be reconstructed with engineering accuracy, and can the named Pi
groups be re-discovered?

\subsection{Organisation}

The paper is organised as follows.
Sections~\ref{sec:buckingham}--\ref{sec:lattice} develop the
theoretical framework (null-space formulation, logarithmic bridge,
SVD recovery, integer lattice search).
Section~\ref{sec:paramsol} connects the framework
to POD in solution space.
Section~\ref{sec:testcase} presents the turbomachinery test case
end-to-end.
Section~\ref{sec:discussion} discusses the workflow, practical
implications, connection to incomplete similarity, and open research
directions.

\section{The Buckingham $\Pi$ Theorem as a Null-Space Problem}
\label{sec:buckingham}

The material in this section is classical; standard references include
Buckingham~\cite{Buckingham1914} and Bridgman~\cite{Bridgman1922}.

Let $q_1, \ldots, q_k$ be physical variables, each with a dimensional
signature expressed in $n$ base units $u_1, \ldots, u_n$:
\[
  [q_j] = u_1^{\alpha_{j1}} u_2^{\alpha_{j2}} \cdots u_n^{\alpha_{jn}},
  \qquad \alpha_{ji} \in \Z.
\]
Assembling these exponents into the \emph{dimensional matrix}
$\balpha \in \Z^{k \times n}$, a monomial
$\Pi = \prod_{j=1}^k q_j^{x_j}$ is dimensionless if and only if
\begin{equation}
  \label{eq:nullspace}
  \balpha^T \bx = \bzero, \qquad \bx = (x_1, \ldots, x_k)^T \in \R^k.
\end{equation}
The set of all solutions is the null space $\kker(\balpha^T)$.
By the rank-nullity theorem applied to $\balpha^T : \R^k \to \R^n$,
\[
  \dim\kker(\balpha^T) = k - \rank(\balpha^T) = k - n
  \quad \text{(when } \rank(\balpha^T) = n\text{)}.
\]
This is the Buckingham count: there are exactly $k - n$ independent
dimensionless groups, each corresponding to a basis vector of
$\kker(\balpha^T)$.
The basis is not unique: any non-singular linear combination of basis
vectors yields an equally valid set of $\Pi$ groups.

\section{The Logarithmic Bridge}
\label{sec:log}

\subsection{Linearising monomials}

The logarithm transforms the multiplicative structure of dimensional
monomials into a linear one.
Define log-variables $\by = (\log q_1, \ldots, \log q_k)^T \in \R^k$.
A $\Pi$ group with exponent vector $\ba \in \kker(\balpha^T)$ becomes
\begin{equation}
  \label{eq:logpi}
  \log \Pi = \sum_{j=1}^k a_j \log q_j = \ba \cdot \by.
\end{equation}
In log-space, a $\Pi$ group is an inner product with a null-space vector.
The $k-n$ independent $\Pi$ groups therefore span the
$(k{-}n)$-dimensional subspace $\mathcal{P} = \kker(\balpha^T)\subset\R^k$.

\subsection{The physical manifold}

Consider a set of $m$ experiments measuring the $k$ variables.
Let $X\in\R^{m\times k}$ be the data matrix whose $(i,j)$ entry is the
$j$-th variable recorded in experiment $i$.
Assemble the log-data matrix $Y = \log X \in \R^{m\times k}$
(entry-wise logarithm).  Under the axiom of dimensional homogeneity, the rows of
$Y$ are constrained to a $(k-n)$-dimensional affine subspace.

\begin{theorem}[Physical manifold in log-space]
  \label{thm:manifold}
  Suppose the $m$ experiments are governed by a physically homogeneous
  relation $F(\Pi_1, \ldots, \Pi_{k-n}) = 0$, and that the experiments
  span the full range of the independent $\Pi$ groups.
  Then the rows of $Y$ lie in a $(k-n)$-dimensional affine subspace of
  $\R^k$ whose direction space is $\kker(\balpha^T)$.
\end{theorem}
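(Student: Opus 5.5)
The plan is to split log-space orthogonally as $\R^k = \col(\balpha)\oplus\kker(\balpha^T)$. Every row $\by_i$ of $Y$ is then written as $\by_i = P_{\col}\by_i + P_{\kker}\by_i$. The claim follows if (a) the $\col(\balpha)$-component is the same for every experiment, and (b) the $\kker(\balpha^T)$-components affinely span all of $\kker(\balpha^T)$. Both parts will be derived from equation~\eqref{eq:logpi} together with the way a change of units acts on $\by$.

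\textbf{Step 1: identify $\col(\balpha)$ as the gauge directions.} Rescale the base units by $u_i\mapsto\lambda_i u_i$. Each variable transforms as $q_j\mapsto q_j\prod_i\lambda_i^{-\alpha_{ji}}$. Writing $\bdelta=-(\log\lambda_1,\dots,\log\lambda_n)^T$, this gives $\by\mapsto\by+\balpha\bdelta$. So unit (gauge) changes move log-data exactly along $\col(\balpha)$. For any $\ba\in\kker(\balpha^T)$ we have $\ba\cdot\balpha\bdelta=0$, hence $\log\Pi=\ba\cdot\by=\ba\cdot P_{\kker}\by$ by~\eqref{eq:logpi}. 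Two conclusions follow.
\begin{itemize}
  \item The $\Pi$ groups see only the $\kker(\balpha^T)$-component of a row. Conversely, choosing a basis $\ba_1,\dots,\ba_{k-n}$ of $\kker(\balpha^T)$ and taking $\log\Pi_1,\dots,\log\Pi_{k-n}$ gives linear coordinates that determine $P_{\kker}\by$ uniquely, because the Gram matrix of the $\ba_\ell$ is invertible.
  \item The $\col(\balpha)$-component, of dimension $n$ when $\rank\balpha=n$, carries exactly the information about the unit scales of the experiment, such as the size and speed scales of a given rig.
\end{itemize}

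\textbf{Step 2: the affine subspace.} I will make the theorem's implicit hypothesis explicit: the $m$ experiments are recorded at a common gauge, meaning the same reference scales and the same system of units. Under this hypothesis $P_{\col}\by_i=\mathbf{c}$ is constant across rows, so every row lies in $\mathbf{c}+\kker(\balpha^T)$. This is an affine subspace of dimension $k-\rank\balpha=k-n$, by the rank--nullity count of Section~\ref{sec:buckingham}. Next I use the hypothesis that the experiments span the full range of the independent $\Pi$ groups, read as: the vectors $(\log\Pi_1,\dots,\log\Pi_{k-n})$ over the experiments affinely span $\R^{k-n}$. Through the linear isomorphism of Step 1, the points $P_{\kker}\by_i$ then affinely span $\kker(\balpha^T)$. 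Hence the affine hull of the rows is exactly $\mathbf{c}+\kker(\balpha^T)$, and its direction space is $\kker(\balpha^T)$.

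\textbf{Main obstacle.} The delicate point is the interplay between the relation $F(\Pi_1,\dots,\Pi_{k-n})=0$ and the claimed dimension $k-n$. Strictly, $F=0$ cuts the data down to a $(k-n-1)$-dimensional hypersurface inside the affine subspace. So the statement has to be read as \emph{containment} in the affine subspace, with the spanning hypothesis ensuring that the affine hull is not smaller. I will state this reading explicitly, and I will also state the common-gauge hypothesis. Without the common-gauge hypothesis the rows spread along $\col(\balpha)$ as well, which is precisely the effect that the gauge-variation construction of the next section exploits.
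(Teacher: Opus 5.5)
Your argument follows the paper's own route. You split $\R^k=\col(\balpha)\oplus\kker(\balpha^T)$, observe that unit rescalings move $\by$ along $\col(\balpha)$, and note that the $\Pi$ groups see only the $\kker(\balpha^T)$-component. It is also more complete than the paper's sketch. The sketch's step ``a variation that changes the $\Pi$ groups displaces $\by$ within $\kker(\balpha^T)$'' silently assumes that the $\col(\balpha)$-component is held fixed, which is exactly the hypothesis you make explicit. Your isomorphism $\mathbf{z}\mapsto(\ba_\ell\cdot\mathbf{z})_\ell$ on $\kker(\balpha^T)$ then supplies the affine-hull argument the sketch leaves out. Your remark that $F=0$ gives only containment, not the hull, is also right. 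Note too that your affine-spanning reading of the range hypothesis can hold only when the zero set of $F$ is not flat in log-$\Pi$ coordinates; for a pure power-law relation the hull drops to dimension $k-n-1$.

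The one step to repair is the physical justification of your extra hypothesis. ``Same reference scales and the same system of units'' does not give $P_{\col}\by_i=\mathbf{c}$. Nor is the $\col(\balpha)$-component ``the size and speed scales of a given rig'', because the non-repeating variables themselves have components along $\col(\balpha)$.

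In the test case, holding $N$ and $D$ fixed and changing $\fl$ moves $\by$ along $\mathbf{e}_Q$, yet $\balpha^T\mathbf{e}_Q=(3,-1)^T\neq\bzero$. With the repeating variables fixed, the rows lie in $\{\by: y_N=c_N,\ y_D=c_D\}$. Its direction space $\operatorname{span}\{\mathbf{e}_Q,\mathbf{e}_H,\mathbf{e}_a\}$ is a $(k-n)$-dimensional complement of $\col(\balpha)$, but it is not $\kker(\balpha^T)$.

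So $P_{\col}\by$ depends on both the scales and the $\Pi$ values. The conclusion ``direction space $=\kker(\balpha^T)$'' holds only under the orthogonal gauge fixing $\balpha^T\by_i=\text{const}$, equivalently $P_{\col}\by_i=\mathbf{c}$. Impose that condition directly as the hypothesis instead of deriving it from a ``same scales'' condition. With this change your proof is complete, and it also makes precise what the paper's sketch tacitly assumes.
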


\begin{proof}[Sketch]
Any variation that keeps all $\Pi$ groups fixed amounts to a change of
base units (gauge transformation), mapping $\by$ to
$\by + \balpha\,\log\boldsymbol{\lambda}$ for some
$\boldsymbol{\lambda} \in \R^n_{>0}$.
This displacement lies in $\col(\balpha)$, the orthogonal complement
of $\kker(\balpha^T)$.
Conversely, a variation that changes the $\Pi$ groups displaces $\by$
within $\kker(\balpha^T)$.
The two subspaces are orthogonal and together span $\R^k$.
\end{proof}

\subsection{The $n$ gauge dimensions}

The orthogonal complement of $\kker(\balpha^T)$ has dimension $n$ and
is spanned by the rows of $\balpha^T$ --- the base-unit scaling
directions.  Rescaling the metre by $\lambda_1$ adds
$\alpha_{j1}\log\lambda_1$ to $\log q_j$ for every variable $q_j$:
a displacement in the first column of $\balpha$.  These $n$ directions
carry no physical information.  The rank-nullity decomposition
\[
  \underbrace{k}_{\text{apparent dimension}}
  \;=\;
  \underbrace{(k-n)}_{\substack{\text{physical content}\\\text{($\Pi$ groups)}}}
  \;+\;
  \underbrace{n}_{\substack{\text{gauge redundancy}\\\text{(unit choices)}}}
\]
is an \emph{information decomposition} of variable space.

\section{SVD Recovery of the $\Pi$-Group Subspace}
\label{sec:svd}

\subsection{PCA of log-data: the idealised case}

Apply SVD to the centred log-data matrix $Y - \bar{Y} = \mathbf{W}\bSigma\mathbf{V}^T$
(here $\mathbf{W}\in\R^{m\times k}$ and $\mathbf{V}\in\R^{k\times k}$ are the left and
right singular matrices; $\mathbf{W}$ is not used further).

\begin{corollary}[PCA recovers $\Pi$ groups]
  \label{cor:pca}
  If the experiments span the full $\Pi$-group space and the
  gauge directions are sampled isotropically, then the first $k-n$
  right singular vectors span $\kker(\balpha^T)$, and the last $n$
  span $\col(\balpha)$.
\end{corollary}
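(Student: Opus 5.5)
The plan is to make the hypotheses quantitative and reduce the claim to an eigenvalue-ordering statement for the sample covariance $C = \frac{1}{m}(Y-\bar Y)^T(Y-\bar Y)$. The right singular vectors of $Y-\bar Y$ are exactly the eigenvectors of $C$, ordered by eigenvalue. So it suffices to show that $C$ is block-diagonal with respect to the orthogonal splitting $\R^k = \kker(\balpha^T)\oplus\col(\balpha)$, and that every eigenvalue on the first block exceeds every eigenvalue on the second.

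Following the sketch of Theorem~\ref{thm:manifold}, I would write each centred log-row as $\by_i - \bar\by = \mathbf{P}\bdelta_i + \balpha\,\bv_i$. Here $\mathbf{P}\in\R^{k\times(k-n)}$ has orthonormal columns spanning $\kker(\balpha^T)$, $\bdelta_i$ records the physical ($\Pi$-group) displacement, and $\bv_i = \log\boldsymbol{\lambda}_i - \overline{\log\boldsymbol{\lambda}}$ is the gauge displacement. Substituting gives
\[
C = \mathbf{P}\,S_{\delta\delta}\,\mathbf{P}^T + \balpha\,S_{vv}\,\balpha^T + \mathbf{P}\,S_{\delta v}\,\balpha^T + \balpha\,S_{v\delta}\,\mathbf{P}^T ,
\]
where the $S$ blocks are the empirical second moments of $\bdelta$ and $\bv$.

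I would interpret the hypotheses as follows. \emph{Isotropic gauge sampling}: the gauge factors are drawn independently of the operating point, so $S_{\delta v}=0$, and the gauge spread is isotropic in log-space, so $\balpha S_{vv}\balpha^T = \sigma_g^2\,\Pi_{\col(\balpha)}$. Under this reading the restriction of $C$ to $\col(\balpha)$ is a multiple of the identity. \emph{Full span of $\Pi$-space}: $S_{\delta\delta}$ is positive definite. With these, the cross terms vanish, $C$ is block-diagonal, and by the rank-nullity splitting from Section~\ref{sec:log} its eigenvectors separate into $k-n$ vectors in $\kker(\balpha^T)$, with eigenvalues those of $S_{\delta\delta}$, and $n$ vectors in $\col(\balpha)$, each with eigenvalue $\sigma_g^2$.

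The main obstacle is the ordering. Nothing in the statement forces the physical variance to dominate the gauge variance. If $\lambda_{\min}(S_{\delta\delta}) < \sigma_g^2$, the gauge directions come first in the SVD and the claim fails. So I would make the hypothesis explicit as a spectral-gap condition, $\lambda_{\min}(S_{\delta\delta}) > \sigma_g^2$. An equivalent and cleaner reading is that in the idealised case the gauge variation is sampled at a small isotropic amplitude, or with $\sigma_g=0$ after gauge-fixing. Under this condition the Courant--Fischer characterisation identifies the top $k-n$ eigenspace of $C$ as exactly $\kker(\balpha^T)$ and the bottom $n$ as $\col(\balpha)$.

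I would close with a remark. If the condition fails but the blocks remain decoupled, the two subspaces are still recovered as invariant subspaces of $C$, only in a different order. This motivates the within-cluster SVD of the next subsection, where the gap appears exactly at position $n$.
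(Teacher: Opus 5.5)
Your reduction to the eigenproblem for $C$ and the block expansion are sound. So is your observation that nothing in the statement forces the physical variance to dominate. The paper gives no separate proof of this corollary. It rests on the orthogonal splitting in the sketch of Theorem~\ref{thm:manifold} and on the gap remark, which tacitly assumes the ordering, so your spectral-gap condition is a necessary sharpening.

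\textbf{The failing step.} The step that fails is ``the gauge factors are drawn independently of the operating point, so $S_{\delta v}=0$''. In your decomposition, $\bv_i=(\balpha^T\balpha)^{-1}\balpha^T(\by_i-\bar\by)$ is the \emph{orthogonal} $\col(\balpha)$-coordinate. It is not the log of the scales actually sampled. Write a row as $\by_i=B\log\boldsymbol{\pi}_i+\balpha\,\mathbf{g}_i$, where $\mathbf{g}_i$ are the sampled log-scales and $B\log\boldsymbol{\pi}$ is the log-row at reference scales. In the test case, $B\log\boldsymbol{\pi}=(\log\fl,\log\hc,0,0,-\log\Mach)^T$ at $N=D=1$. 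Then
\[
\bv_i=(\mathbf{g}_i-\bar{\mathbf{g}})+(\balpha^T\balpha)^{-1}\balpha^T B(\log\boldsymbol{\pi}_i-\overline{\log\boldsymbol{\pi}}),
\]
and even when $\mathbf{g}$ is uncorrelated with $\boldsymbol{\pi}$,
\[
S_{\delta v}=\mathbf{P}^T B\,S_{\pi\pi}\,B^T\balpha\,(\balpha^T\balpha)^{-1},
\]
where $S_{\pi\pi}$ is the second-moment matrix of $\log\boldsymbol{\pi}$. The matrix $\mathbf{P}^T B$ is invertible, because the $\Pi$ values are determined by the $\kker(\balpha^T)$-component of $\by$. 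Since $S_{\pi\pi}$ is positive definite, $S_{\delta v}$ vanishes only if $\col(B)\subseteq\kker(\balpha^T)$.

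That condition fails in the turbomachine. Changing $\fl$ at fixed $N,D$ moves $\by$ along $\mathbf{e}_Q$, and $\balpha^T\mathbf{e}_Q=(3,-1)^T\neq\bzero$. So $\kker(\balpha^T)$ is not even an invariant subspace of $C$, in expectation as well as empirically. Your ``$\sigma_g=0$ after gauge-fixing'' reading breaks for the same reason if gauge-fixing means holding the scales fixed. With every experiment at one $(N,D)$, the centred rows span the coordinate subspace $\operatorname{span}(\mathbf{e}_Q,\mathbf{e}_H,\mathbf{e}_a)$. That subspace, not $\kker(\balpha^T)$, is what the top three right singular vectors return.

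\textbf{The repair.} Make the decoupling itself the hypothesis. The gauge cloud of every operating point must be centred on one common translate of $\kker(\balpha^T)$ and be isotropic within $\col(\balpha)$. Arranging this in practice presupposes knowing $\col(\balpha)$. It is exactly the unproved ``conversely'' clause in the sketch of Theorem~\ref{thm:manifold}, so the paper carries the same assumption implicitly. With this hypothesis, your block-diagonality, spectral-gap and eigenspace steps go through unchanged.

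Without it, even a dominant isotropic gauge spread only makes the \emph{bottom} $k-n$ eigenvectors approach $\kker(\balpha^T)$, with error of order $\|B S_{\pi\pi} B^T\|/\sigma_g^2$. Your closing remark should therefore credit the within-cluster SVD with removing this coupling, not just the ordering problem. Subtracting cluster means deletes the $B\log\boldsymbol{\pi}$ term exactly.
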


\begin{remark}[Singular-value gap]
The singular values corresponding to physical directions reflect the
range of the $\Pi$ groups across experiments; those corresponding to
gauge directions reflect the spread of dimensional scales.  A clear gap
between the $(k{-}n)$-th and $(k{-}n{+}1)$-th singular values
identifies the boundary between physics and gauge.
\end{remark}

\begin{remark}[Basis non-uniqueness]
  \label{rem:nonunique}
Corollary~\ref{cor:pca} recovers a \emph{basis} of $\kker(\balpha^T)$,
not a specific one.  SVD returns the maximum-variance basis, which is
generically dense and does not correspond to the sparse, interpretable
$\Pi$ groups of classical dimensional analysis.
Section~\ref{sec:lattice} addresses how to obtain the latter.
\end{remark}

\subsection{Within-cluster SVD: the gauge-variation approach}
\label{ssec:wsvd}

In practice, uncontrolled measurements rarely sample gauge directions
isotropically, and the idealised Corollary~\ref{cor:pca} may yield a
blurred gap.  A more robust approach exploits \emph{gauge variation}
by design.

\begin{definition}[Gauge variation]
A dataset has \emph{gauge variation} if each operating condition
(fixed values of all $\Pi$ groups) is observed in $M \ge n+1$
dimensional realisations obtained by independently varying the $n$
characteristic scales (e.g.\ rotational speed and impeller diameter
for a turbomachine).
\end{definition}

Within each cluster~$c$, the log-data deviations satisfy
\begin{equation}
  \label{eq:gauge}
  \Delta\by_{c,j} \;=\; \by_{c,j} - \bar{\by}_c
  \;=\; \balpha\,\bdelta_{g,j},
  \qquad \bdelta_{g,j} \in \R^n,
\end{equation}
i.e.\ they lie exactly in $\col(\balpha)$.

\begin{algorithm_env}[Within-cluster SVD]
\label{alg:wsvd}
Given log-data $Y \in \R^{m\times k}$ and cluster labels:
\begin{enumerate}
  \item Compute cluster means $\bar{\by}_c$ and form the within-cluster
        deviation matrix $\Delta Y$ (rows: $\Delta\by_i = \by_i - \bar{\by}_{c_i}$).
  \item Compute SVD: $\Delta Y = U\Sigma V^T$, where $V\in\R^{k\times k}$
        is the matrix of right singular vectors.
  \item Partition $V = [V_{\rm col} \mid V_{\ker}]$ with
        $V_{\rm col}\in\R^{k\times n}$ (first $n$ columns) and
        $V_{\ker}\in\R^{k\times(k-n)}$ (last $k-n$ columns).
        The first $n$ right singular vectors span $\col(\balpha)$;
        $V_{\ker}$ spans $\kker(\balpha^T)$.
\end{enumerate}
\end{algorithm_env}

\begin{theorem}[Exact recovery of $\kker(\balpha^T)$]
\label{thm:recovery}
Suppose each within-cluster deviation $\Delta\by_{c,j}$ lies in
$\col(\balpha)$ and the gauge directions are fully excited
(the gauge increment matrix $\Delta G\in\R^{N\times n}$ has rank $n$).
Then $\sigma_{n+1}=\cdots=\sigma_k=0$ exactly, and the last $k-n$
right singular vectors of $\Delta Y$ form a basis for $\kker(\balpha^T)$.
\end{theorem}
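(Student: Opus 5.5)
The plan is to write the within-cluster deviation matrix as a product of the gauge increments and the dimensional matrix, and then read off the rank and null space of $\Delta Y$ from that factorisation. Stacking the rows of \eqref{eq:gauge} over all clusters and replicates gives
\[
  \Delta Y \;=\; \Delta G\,\balpha^T,
  \qquad \Delta G\in\R^{N\times n},\ \balpha^T\in\R^{n\times k},
\]
where row $i$ of $\Delta G$ is $\bdelta_{g,i}^T$. Everything then follows from rank arguments about this product together with standard SVD facts. Throughout I assume, as in Section~\ref{sec:buckingham}, that $\rank(\balpha)=n$; this hypothesis should be stated explicitly, since otherwise the count $k-n$ fails.

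\textbf{Step 1 (rank of $\Delta Y$).} Since $\Delta G$ has full column rank $n$, it is injective. Hence $\kker(\Delta G\,\balpha^T)=\kker(\balpha^T)$. By rank--nullity,
\[
  \rank(\Delta Y)=k-\dim\kker(\balpha^T)=n .
\]

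\textbf{Step 2 (singular values).} The number of nonzero singular values of $\Delta Y$ equals its rank. With the usual decreasing ordering this gives $\sigma_1\ge\cdots\ge\sigma_n>0$ and $\sigma_{n+1}=\cdots=\sigma_k=0$ exactly. If $N<k$, the trailing singular values are padded by zeros in the full SVD, which does not change the conclusion.

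\textbf{Step 3 (singular vectors).} The right singular vectors with zero singular value form an orthonormal basis of $\kker(\Delta Y)$. To see this, note that $\Delta Y^T\Delta Y=V\Sigma^T\Sigma V^T$, so $\kker(\Delta Y)=\kker(\Delta Y^T\Delta Y)$ is spanned by the columns of $V$ whose eigenvalue $\sigma_i^2$ vanishes. By Step 1 this kernel is $\kker(\balpha^T)$, so $V_{\ker}$ is a basis of $\kker(\balpha^T)$. Orthogonality of $V$ then makes the first $n$ columns span $\kker(\balpha^T)^\perp=\col(\balpha)$, which also justifies the partition in Algorithm~\ref{alg:wsvd}.

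\textbf{Main obstacle.} I do not expect a technical obstacle; the argument is essentially the injectivity step in Step 1. The care needed is all in the bookkeeping:
\begin{itemize}
  \item making the factorisation $\Delta Y=\Delta G\,\balpha^T$ explicit from \eqref{eq:gauge};
  \item noting that the dimension count needs $\rank(\balpha)=n$;
  \item observing that when singular values are repeated the individual vectors in $V_{\ker}$ are not unique, but their span is, which is all the theorem claims.
\end{itemize}
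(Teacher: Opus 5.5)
Your proof is correct and is essentially the paper's argument. Both factor $\Delta Y=\Delta G\,\balpha^T$ and read off the rank and the zero-singular-value subspace; your kernel formulation ($\kker(\Delta Y)=\kker(\balpha^T)$ by injectivity of $\Delta G$) is just the dual of the paper's statement that the row space of $\Delta Y$ is $\col(\balpha)$, whose orthogonal complement is $\kker(\balpha^T)$. You are also right to state explicitly that full column rank of $\Delta G$ together with $\rank(\balpha)=n$ forces $\rank(\Delta Y)=n$ exactly (not merely $\le n$), which the paper uses only implicitly when it identifies the last $k-n$ right singular vectors.
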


\begin{proof}
Every row of $\Delta Y$ satisfies $\Delta\by_i = \balpha\,\bdelta_i$,
so $\Delta Y = \Delta G\,\balpha^T$ and $\rank(\Delta Y)\le n$.
Hence $\sigma_{n+1}=\cdots=\sigma_k=0$.  The row space of $\Delta Y$
equals $\col(\balpha^T)$; its orthogonal complement in $\R^k$
is $\kker(\balpha^T)$, spanned by the last $k-n$ right singular vectors.
\end{proof}

\section{From Subspace to Named $\Pi$ Groups: Integer Lattice Search}
\label{sec:lattice}

\subsection{Why orthogonal rotation fails}

The columns of $V_{\ker}$ span $\kker(\balpha^T)$ but are not the named
$\Pi$ groups of classical dimensional analysis.  One might attempt to
recover them by applying a sparsity-promoting orthogonal rotation
(varimax, sparse PCA) within the $(k{-}n)$-dimensional subspace.
This strategy fails in general.

Every orthogonal rotation preserves the inner product structure.
But the conventional $\Pi$ groups are generically not mutually
orthogonal as exponent vectors.  In the turbomachine example
(see the details of this use case in Section~\ref{sec:testcase}):
\[
  \ba_\fl \cdot \ba_\hc = 8 \neq 0, \quad
  \ba_\fl \cdot \ba_{\Mach} = -4 \neq 0, \quad
  \ba_\hc \cdot \ba_{\Mach} = -4 \neq 0.
\]
No rotation of $V_{\ker}$ can simultaneously align three columns with
$\ba_\fl$, $\ba_\hc$, and $\ba_{\Mach}$ when these are pairwise
non-orthogonal.

\subsection{The integer lattice}

The correct algebraic structure exploits the fact that every
dimensional monomial has \emph{integer} exponents.  The named
$\Pi$ groups therefore belong to the \emph{integer lattice}
\begin{equation}
  \mathcal{L} \;=\; \kker(\balpha^T)\cap\Z^k.
\end{equation}
All elements of $\mathcal{L}$ are integer linear combinations of
the $k-n$ independent generators, but the generators are not uniquely
determined by $\ell_2$ norm alone: the catalogue contains primitive vectors
with smaller norm that are products of the named groups.
The conventional named $\Pi$ groups are recovered by specifying $n$
\emph{repeating variables} --- those whose dimension vectors form a basis
for $\R^n$, and which represent the characteristic scales of the system
(here $N$ and $D$) --- and retaining only those lattice vectors that
(i) involve at least one repeating variable and
(ii) involve \emph{exactly one} non-repeating variable.
Condition (ii) is the algebraic signature of the Buckingham basis:
each basis vector associates a single non-repeating variable with the
$n$ characteristic scales.

\begin{remark}[Equivalent representations]
Two lattice vectors that differ only in the choice of velocity scale
represent the same physics.  For example, $H/a^2$ (non-repeating
variables: $\{H,a\}$, violates condition (ii)) and
$\hc = H/(N^2D^2)$ (non-repeating variable: $\{H\}$, satisfies (ii))
are equivalent up to a factor of $\Mach^2$, since $[a]=[ND]=LT^{-1}$.
The repeating-variable filter selects $\hc$ over $H/a^2$ because the
former uses the machine's own scales $N$ and $D$ as the reference.
\end{remark}

\begin{algorithm_env}[Integer lattice search]
\label{alg:lattice}
Given the projector $P = V_{\ker}V_{\ker}^T$ from Algorithm~\ref{alg:wsvd}:
\begin{enumerate}
  \item Choose maximum exponent magnitude $m_{\max}$
        ($3$--$4$ suffices for engineering unit systems).
  \item Enumerate non-zero integer vectors
        $\ba \in \{-m_{\max},\ldots,m_{\max}\}^k$.
  \item Retain those satisfying
        $\|\ba - P\ba\| < \varepsilon\,\|\ba\|$
        ($\varepsilon \sim 10^{-6}$).
  \item Reduce: first nonzero entry positive; divide by $\gcd$ of entries.
  \item Sort by $(\ell_0, \ell_2)$ in ascending order, where
        $\ell_0 = \|\ba\|_0$ is the number of non-zero exponents and
        $\ell_2 = \|\ba\|_2$ is the Euclidean norm.
  \item \textbf{(Optional — named-group identification.)}
        If the user specifies $n$ \emph{repeating variables}
        $\mathcal{R}\subseteq\{1,\ldots,k\}$ (those whose dimension vectors
        span $\R^n$, representing the characteristic scales), apply two filters:
        \begin{enumerate}
          \item[(a)] $\ba$ involves at least one repeating variable:
                $\exists\, j\in\mathcal{R}$ with $a_j\ne 0$.
          \item[(b)] $\ba$ involves exactly one non-repeating variable:
                $|\{j\notin\mathcal{R}: a_j\ne 0\}| = 1$.
        \end{enumerate}
        The surviving vectors, sorted by $(\ell_0,\ell_2)$, are the
        conventional named $\Pi$ groups.
        Vectors excluded by (b) --- those involving two or more
        non-repeating variables --- are products of the named groups
        and are not independent generators.
\end{enumerate}
\end{algorithm_env}

\noindent
The procedure is entirely data-driven: it uses only $V_{\ker}$,
recovered from gauge-variation data, and requires no knowledge of $\balpha$.

\section{Parameter Space and Solution Space}
\label{sec:paramsol}

The framework of Sections~\ref{sec:buckingham}--\ref{sec:lattice}
operates in \emph{parameter space}: the space of physical parameters
$(q_1,\ldots,q_k)$ characterising an experiment.  POD operates in
\emph{solution space}: the space of velocity, pressure, or temperature
fields at different conditions.  The two are complementary.

\subsection{POD of a physical field}

Let $\mathbf{u}(\mathbf{x};\boldsymbol{\pi})$ be the velocity field at
spatial point $\mathbf{x}$, parametrised by the $\Pi$ groups
$\boldsymbol{\pi} = (\Pi_1,\ldots,\Pi_{k-n})$.
Collect $m$ snapshots at different conditions into the snapshot matrix
$\bU\in\R^{N\times m}$ ($N$ = spatial degrees of freedom).  The SVD
$\bU = \bPhi\bSigma\mathbf{A}^T$ yields POD modes $\boldsymbol{\phi}_i(\mathbf{x})$
and modal coefficients $a_i(\boldsymbol{\pi}^{(j)})$.

\subsection{Dimensional analysis constrains the POD surrogate}

The Buckingham theorem applies to the modal coefficients: the
dimensionless flow field can depend only on the $\Pi$ groups, so
$a_i = a_i(\Pi_1,\ldots,\Pi_{k-n})$.
Any surrogate model fitted to the modal coefficients must be expressed
as a function of the $\Pi$ groups, not of the raw dimensional parameters.
Violating this constraint produces predictions that change when units
are changed --- a fundamental inconsistency.

The $\Pi$-group subspace identified from log-data (Section~\ref{sec:svd})
provides the natural, dimensionally consistent input space for the
POD surrogate.  The two decompositions --- SVD of log-data (parameter space)
and SVD of snapshots (solution space) --- are therefore not independent:
the former defines the coordinate system in which the latter must be
parameterised.

\section{Turbomachinery Test Case}
\label{sec:testcase}

This section makes the theory of Sections~\ref{sec:buckingham}--\ref{sec:paramsol}
fully concrete.  
All steps are implemented in the 
Python script \texttt{turbomachine\_pi\_test.py} available upon request to the corresponding author.

\subsection{Dimensional structure}
\label{ssec:dimstructure}

We consider a compressible turbomachine (centrifugal compressor or fan)
described by five dimensional variables
\cite{Cumpsty1989,Dixon2010}:
\begin{center}
\footnotesize
\renewcommand{\arraystretch}{1.35}
\begin{tabular}{p{12mm}p{50mm}p{30mm}p{24mm}}
  \toprule
  Symbol & Physical quantity & SI unit & Dimension\\
  \midrule
  $Q$  & Volumetric flow rate           & m$^3$\,s$^{-1}$  & $L^3 T^{-1}$\\
  $H$  & Specific head ($\Delta p/\rho$) & m$^2$\,s$^{-2}$ & $L^2 T^{-2}$\\
  $N$  & Rotational speed               & s$^{-1}$          & $T^{-1}$\\
  $D$  & Impeller diameter              & m                 & $L$\\
  $a$  & Speed of sound                 & m\,s$^{-1}$       & $L T^{-1}$\\
  \bottomrule
\end{tabular}
\end{center}

\noindent
Using specific head $H=\Delta p/\rho$ eliminates fluid density from
the variable list; mass $M$ does not appear, so the relevant base units
are $n=2$ ($L$ and $T$).  The dimensional exponent matrix is
\begin{equation}
  \balpha =
  \begin{pmatrix}
    3 & -1 \\ 2 & -2 \\ 0 & -1 \\ 1 & 0 \\ 1 & -1
  \end{pmatrix}
  \in \R^{5\times 2},
  \quad\text{(rows: }Q,H,N,D,a;\;\text{cols: }L,T\text{)},
\end{equation}
with $\rank(\balpha)=2$, giving $k-n=3$ $\Pi$ groups, namely, the flow coefficient $\phi$,
the head coefficient, $\psi$, and the machine Mach number, $Ma$, defined as:
\begin{equation}
  \label{eq:pigroups}
  \fl = \frac{Q}{N D^3}, \qquad
  \hc = \frac{H}{N^2 D^2}, \qquad
  \Mach = \frac{N D}{a},
\end{equation}
with exponent vectors
$\ba_\fl = (1,0,-1,-3,0)^T$, $\ba_\hc = (0,1,-2,-2,0)^T$,
$\ba_{\Mach} = (0,0,1,1,-1)^T$, all satisfying
$\balpha^T\ba = \bzero$ to round-off.

\begin{remark}
The three vectors $\ba_\fl$, $\ba_\hc$, $\ba_{\Mach}$ are not
orthogonal: $\ba_\fl\cdot\ba_\hc=8$, $\ba_\fl\cdot\ba_\Mach=-4$,
$\ba_\hc\cdot\ba_\Mach=-4$.
This non-orthogonality explains the failure of varimax rotation
described in Section~\ref{sec:lattice}.
\end{remark}

\subsection{Synthetic performance map}
\label{ssec:perfmap}

Let us assume that the training data set originate from a compressor-like performance map, such as:
\begin{equation}
  \label{eq:map}
  \hc(\fl,\Mach) = \hc_{\rm peak}(\Mach)
  \left[1 - \!\left(\frac{\fl-\fl_{\rm opt}(\Mach)}{w(\Mach)}\right)^{\!2}\right],
\end{equation}
with $\hc_{\rm peak} = 0.55(1-0.50\,\Mach^2)$,
$\fl_{\rm opt} = 0.25 + 0.10\,\Mach$, and $w = 0.18 - 0.06\,\Mach$,
evaluated over $\fl\in[0.07,0.43]$, $\Mach\in[0.20,0.65]$.
Three compressibility effects are built in: the peak head drops, the
design point shifts toward higher flow, and the operating range narrows.
The map is \emph{known only inside the script}; the algorithm receives
dimensional measurements and must reconstruct it.

Figure~\ref{fig:perf_map} shows the analytic performance map~\eqref{eq:map}
across the full operating range.

\begin{figure}[ht]
  \centering
  \includegraphics[width=0.97\columnwidth]{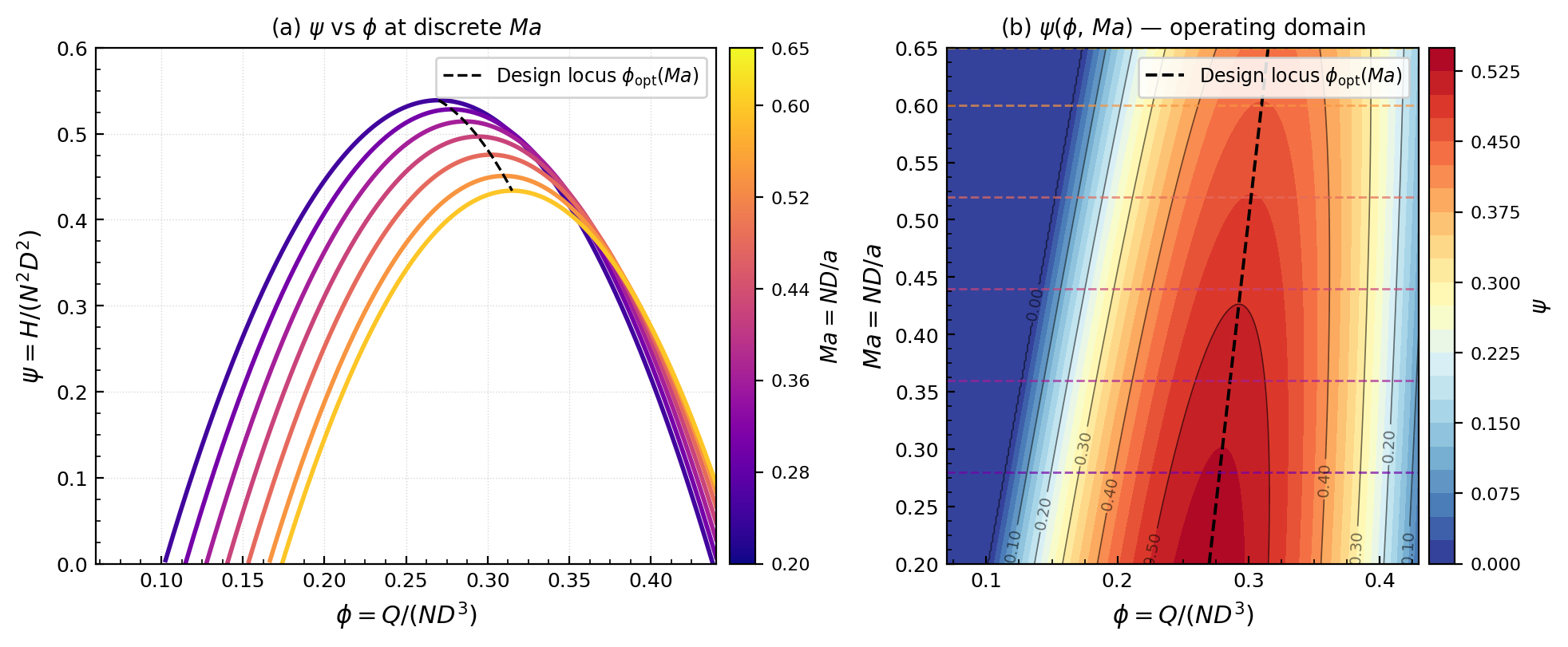}
  \caption{Synthetic compressor performance map $\hc(\fl,\Mach)$
    (Eq.~\eqref{eq:map}).
    \textit{Panel~(a):} head coefficient $\hc$ versus flow coefficient
    $\fl$ for seven Mach numbers spanning $\Mach\in[0.20,0.65]$
    (colour-coded from blue to red).
    Three compressibility effects are visible: the peak head
    $\hc_{\rm peak}$ decreases, the design point $\fl_{\rm opt}$ shifts
    toward higher flow, and the operating range narrows as $\Mach$
    increases.  The dashed curve is the design locus
    $(\fl_{\rm opt}(\Mach),\,\hc_{\rm peak}(\Mach))$.
    \textit{Panel~(b):} filled-contour representation of $\hc$ in the
    $(\fl,\Mach)$ operating domain; the same design locus is overlaid
    (dashed), and the seven $\Mach$ levels of panel~(a) are marked as
    horizontal lines.}
  \label{fig:perf_map}
\end{figure}

\subsection{Data generation with gauge variation}
\label{ssec:datagen}

The dataset is organised in clusters.  Each cluster corresponds to one
operating point $(\fl_c,\Mach_c)$, with $\hc_c=\hc(\fl_c,\Mach_c)$
fixed.  Within cluster $c$, $M=40$ realisations are generated by
drawing $(N_j,D_j)$ log-uniformly,
\[
  N_j \sim 10^{\mathcal{U}(0.5,\,3)}, \qquad
  D_j \sim 10^{\mathcal{U}(-1.5,\,0.5)},
\]
and computing the remaining variables from~\eqref{eq:pigroups}:
\[
  Q_j = \fl_c N_j D_j^3, \quad
  H_j = \hc_c N_j^2 D_j^2, \quad
  a_j = N_j D_j / \Mach_c.
\]
The within-cluster deviations
$\Delta\by_{c,j} = \balpha\,(\log D_j, -\log N_j)^T$
lie exactly in $\col(\balpha)$, satisfying the gauge-variation
requirement of Theorem~\ref{thm:recovery}.
A $25\times 20$ operating-point grid (after clipping $\hc<10^{-3}$:
400 valid clusters) gives $m=16\,000$ data rows in total.

\subsection{Within-cluster SVD: numerical results}
\label{ssec:svdresults}

Applying Algorithm~\ref{alg:wsvd} to the 16\,000-row dataset yields
\begin{equation}
  \sigma_1 = 794.2, \quad \sigma_2 = 277.0, \qquad
  \sigma_3 = \sigma_4 = \sigma_5 = 0 \quad (\text{to machine precision}).
\end{equation}
The gap is exact: $\sigma_3 < 10^{-10}$ in double precision, confirming
Theorem~\ref{thm:recovery}.  The canonical angles between the recovered
$\kker(\balpha^T)$ and the analytic null space of $\balpha^T$ are all
below $10^{-12}$ degrees.

Once $V_{\ker}$ is known, the $\Pi$-group values are recovered from
any data point by projection:
$\Pi_j = \exp(\bv_j\cdot\by)$ with $\bv_j\in\kker(\balpha^T)$,
and agree with the analytically computed $(\fl,\hc,\Mach)$ to within
$4\times10^{-16}$ (relative), i.e.\ unit round-off.

\subsection{Integer lattice search: numerical results}
\label{ssec:latticeresults}

Applying Algorithm~\ref{alg:lattice} with $m_{\max}=4$ and
$\varepsilon=10^{-6}$ yields 69 distinct integer vectors in
$\mathcal{L}$ (see Appendix~\ref{app:lattice_full}, Table~\ref{tab:lattice_full}).  
Their membership residuals $\|\ba - P\ba\|/\|\ba\|$
are all below $10^{-14}$.  The three corresponding to the textbook
turbomachinery $\Pi$ groups (Buckingham basis with $N$, $D$ as
repeating variables) are:

\begin{center}
\renewcommand{\arraystretch}{1.3}
\begin{tabular}{rrrrrll}
  \toprule
  $e_Q$ & $e_H$ & $e_N$ & $e_D$ & $e_a$ & Expression & Group\\
  \midrule
  $0$ & $0$ & $+1$ & $+1$ & $-1$ & $ND/a$       & $\Mach$\\
  $0$ & $+1$ & $-2$ & $-2$ & $0$  & $H/(N^2D^2)$ & $\hc$\\
  $+1$ & $0$ & $-1$ & $-3$ & $0$ & $Q/(ND^3)$   & $\fl$\\
  \bottomrule
\end{tabular}
\end{center}

\noindent
The remaining 66 vectors are products and integer powers of these
three (e.g.\ $\fl^2$, $\fl\cdot\hc$, $\Mach^2$, $\fl/\Mach$, \ldots).
Thus $\fl$, $\hc$, $\Mach$ are re-discovered as the generators of
$\mathcal{L}$ from dimensional data alone, without prior knowledge of
the unit structure.

\subsection{Intrinsic dimensionality of the performance map}
\label{ssec:dimtest}

With the $\Pi$-group subspace in hand, we test how many $\Pi$ groups
are active inputs to the performance map.
Degree-5 polynomial regression on the 400 cluster means gives:

\begin{center}
\renewcommand{\arraystretch}{1.25}
\begin{tabular}{p{58mm}p{22mm}p{52mm}}
  \toprule
  Model & $R^2$ & Interpretation\\
  \midrule
  $\hc \approx p_5(\fl)$          & 0.843 & $\Mach$ effect unaccounted for\\
  $\hc \approx p_5(\fl,\,\Mach)$  & 1.000 & Exact to regression tolerance\\
  \bottomrule
\end{tabular}
\end{center}

\noindent
The gap is decisive: the performance map is intrinsically
two-dimensional in $\Pi$ space.
The 1-input model's $R^2=0.843$ reflects the marginal correlation
between $\fl$ and $\hc$; the 12\% variance captured by the third SVD
mode of the log-$\Pi$ matrix reflects the curvature of the 2D manifold
$\{(\log\fl,\log\hc,\log\Mach):\hc=\hc(\fl,\Mach)\}$, not a third
active input.  The regression test is the correct diagnostic.

\subsection{Performance map reconstruction}
\label{ssec:reconstruction}

With two active $\Pi$ inputs confirmed, the surrogate is a degree-5
polynomial $\hat{\hc}(\fl,\Mach)=\sum_{p+q\le5}c_{pq}\fl^p\Mach^q$
(21 coefficients) fitted by ridge regression ($\lambda=10^{-8}$) on the
400 cluster means.

\begin{center}
\renewcommand{\arraystretch}{1.25}
\begin{tabular}{lcc}
  \toprule
  Metric & Training & Test ($N_{\rm test}=300$)\\
  \midrule
  $R^2$               & $0.9999\,9998$ & $0.9999\,9999$\\
  Mean relative error & ---            & $0.005\%$\\
  Max relative error  & ---            & $0.24\%$\\
  \bottomrule
\end{tabular}
\end{center}

\noindent
The reconstruction is essentially exact.  The small residual
($< 0.3\%$ pointwise) arises from truncation of the polynomial basis;
the true map~\eqref{eq:map} is a degree-4 polynomial in $(\fl,\Mach)$
and lies within the degree-5 span.

Figure~\ref{fig:results} collects the diagnostic results for all
algorithmic steps.

\begin{figure}[ht]
  \centering
  \includegraphics[width=0.95\columnwidth]{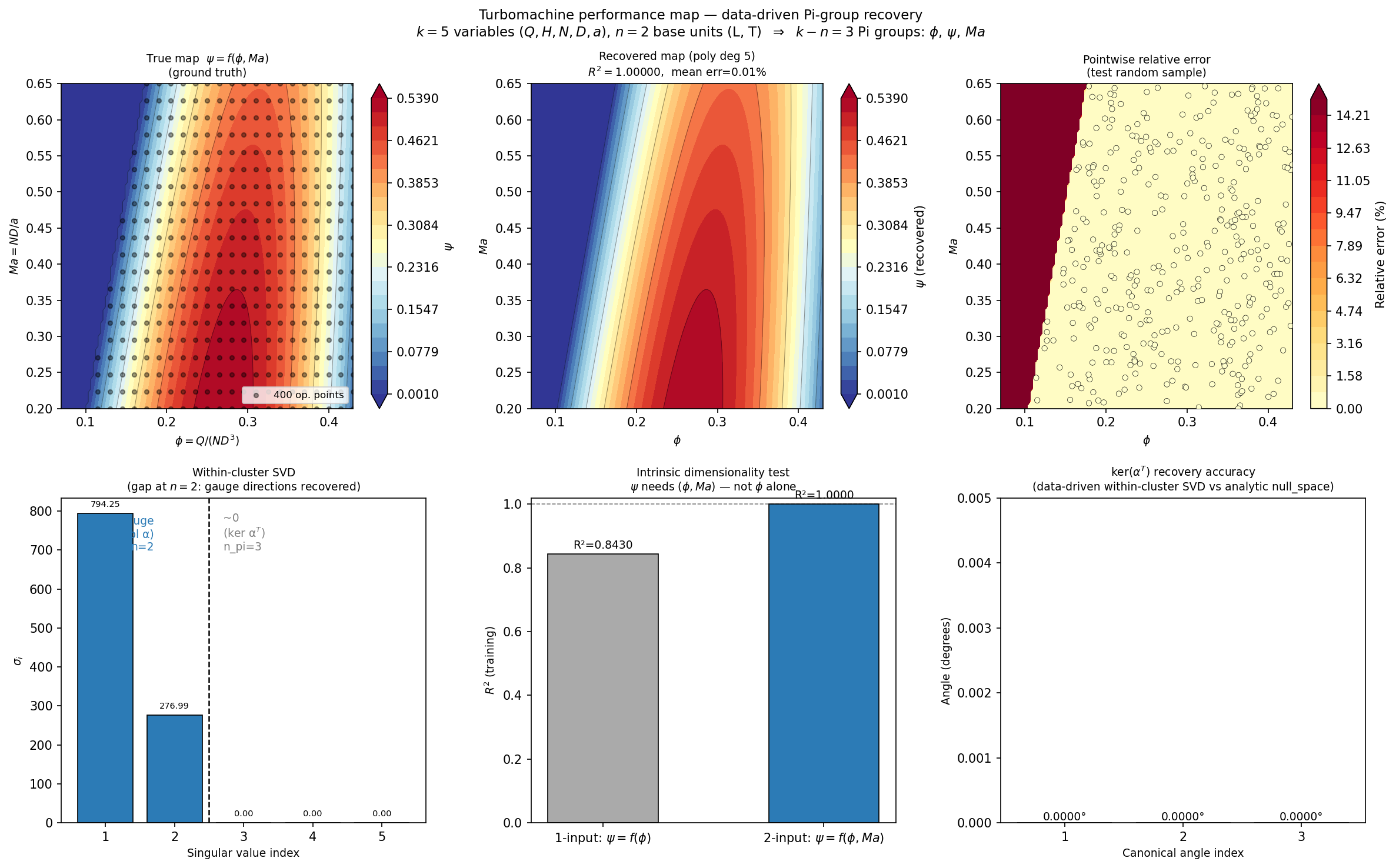}
  \caption{Turbomachinery test case.
    \textit{Top row, left to right:}
    true performance map $\hc(\fl,\Mach)$;
    recovered map from degree-5 polynomial regression;
    pointwise relative error on 300 test points.
    \textit{Bottom row, left to right:}
    within-cluster singular values showing exact gap at $n=2$;
    $R^2$ comparison for 1-input vs.\ 2-input regression models;
    canonical angles between recovered and analytic $\kker(\balpha^T)$.
    All results generated by \texttt{turbomachine\_pi\_test.py}.}
  \label{fig:results}
\end{figure}

\section{Discussion}
\label{sec:discussion}

\subsection{End-to-end workflow}

Table~\ref{tab:workflow} summarises the complete pipeline demonstrated
in this paper.  Steps~1--2 correspond to data collection and
logarithmic transformation.  Step~3 (within-cluster SVD) recovers the
subspace; Step~4 (integer lattice search) recovers the named groups.
Steps~5--7 operate in the identified $\Pi$-group coordinates.

\begin{table}[ht]
  \centering
  \caption{End-to-end data-driven Pi-group discovery workflow.}
  \label{tab:workflow}
  \renewcommand{\arraystretch}{1.3}
  \begin{tabular}{cp{48mm}p{62mm}}
    \toprule
    Step & Action & Key tool / result\\
    \midrule
    1 & Collect dimensional data with gauge variation
      & $M\ge n+1$ realisations per operating condition\\
    2 & Log-transform: $Y=\log X$
      & Linearises monomials\\
    3 & Within-cluster SVD of $\Delta Y$
      & $\kker(\balpha^T)$ exact; gap at index $n$\\
    4 & Integer lattice search on $\mathcal{L}=\kker(\balpha^T)\cap\Z^k$
      & Named $\Pi$ groups via repeating-variable filter (Algorithm~\ref{alg:lattice}, step~6)\\
    5 & Detect intrinsic dimensionality
      & $R^2$ regression test with $1,2,\ldots$ Pi inputs\\
    6 & Fit surrogate in $\Pi$-group space
      & Polynomial regression; $R^2>0.9999$\\
    7 & Validate on held-out data
      & Mean relative error $<0.01\%$ (test case)\\
    \bottomrule
  \end{tabular}
\end{table}

\subsection{Comparison with existing methods}

Table~\ref{tab:comparison} places the present framework in the context
of the Buckingham theorem and data-driven latent-variable methods.

\begin{table}[ht]
  \centering
  \caption{Comparison of approaches for identifying dimensionless groups.}
  \label{tab:comparison}
  \renewcommand{\arraystretch}{1.35}
  \begin{tabular}{p{28mm}p{48mm}p{48mm}}
    \toprule
    & \textbf{Buckingham ($\balpha$ known)} & \textbf{This paper (data-driven)}\\
    \midrule
    Input
      & Dimensional exponent matrix $\balpha$
      & Dimensional measurements $X$\\
    Pi-group subspace
      & $\operatorname{null\_space}(\balpha^T)$
      & Within-cluster SVD of $\Delta Y$\\
    Named $\Pi$ groups
      & $\operatorname{null\_space}(\balpha^T)\cap\Z^k$ (analytic)
      & Integer lattice search on $\kker(\balpha^T)_{\rm data}\cap\Z^k$\\
    Performance map
      & Polynomial regression in $\Pi$-space
      & Polynomial regression in $\Pi$-space\\
    Prior knowledge
      & $\balpha$ required
      & Gauge variation required; $\balpha$ not needed\\
    \bottomrule
  \end{tabular}
\end{table}

\subsection{Gauge variation in practice}

The test demonstrates a viable workflow for experimental turbomachinery
data.  In a real rig campaign one rarely has the luxury of explicit
gauge variation; physical constraints (fixed geometry, fixed fluid)
limit the accessible parameter space.  However, the same effect can be
achieved \emph{across} a family of geometrically similar machines:
fans with different diameters $D$ but the same $(\fl,\Mach)$ provide
gauge variation in the $L$ direction; measurements at different
rotational speeds $N$ provide the $T$ direction.
The fundamental requirement is that the ratio of scales be varied
with at least $n$ independent degrees of freedom across the dataset.

\subsection{Connection to incomplete similarity}

Barenblatt's distinction between complete and incomplete similarity
\cite{Barenblatt1996} has a natural interpretation in the data-driven
framework.
When the phenomenon exhibits complete similarity in $\Pi_j$ ---
meaning the output is insensitive to $\Pi_j$ in the operating range
of interest --- the corresponding direction in log-data SVD has a small
singular value, and the effective dimension of the physical subspace
reduces from $k-n$ to $k-n-1$.
This manifests as an additional gap in the singular-value spectrum,
providing a data-driven diagnostic for complete similarity.

When incomplete similarity holds --- $\Phi\sim\Pi_j^\alpha G(\Pi_i,\ldots)$
with an anomalous exponent $\alpha$ --- the log-data no longer lies
on a flat affine subspace.  Standard linear SVD will misidentify the
subspace dimension.  In such cases, the anomalous exponent $\alpha$
must be estimated from the data by nonlinear manifold-learning methods
(ISOMAP, local PCA, neural autoencoders), entirely consistent with
Barenblatt's framework.

\subsection{Open research directions}

\begin{enumerate}
  \item \textbf{Robust intrinsic-dimension estimation.}
    The singular-value gap criterion for detecting $k-n$ is sensitive to
    noise and anisotropic sampling.  Robust estimators (bootstrap
    gap tests, MDL criteria) are needed for real experimental data.

  \item \textbf{Dimensionally consistent POD surrogates.}
    Once the $\Pi$-group coordinate system is established, POD modal
    coefficients should be modelled exclusively in $\Pi$-group space.
    Physics-informed regression methods (Gaussian processes with
    dimensional covariance kernels, polynomial chaos in $\Pi$-group space)
    are natural candidates.

  \item \textbf{$G$-equivariant autoencoders.}
    Implementing the dimensional scaling group $G$ as an equivariance
    constraint in a variational autoencoder would guarantee that the
    latent space is a subset of $\Pi$-group space, unifying
    physics-informed machine learning and classical dimensional analysis.

  \item \textbf{Noisy and incomplete gauge variation.}
    In practice, gauge variation is imperfect.  Analysing the effect
    of measurement noise on the membership residual $\|\ba-P\ba\|$ and
    on the canonical angles between recovered and true $\kker(\balpha^T)$
    would quantify the robustness of the integer lattice search.

  \item \textbf{Multi-scale and turbulent flows.}
    In turbulence, the number of active POD modes $r$ substantially
    exceeds $k-n$.  Analysing the modal coefficients of excess modes
    as functions of the $\Pi$ groups could reveal hidden similarity
    structures in the inertial range.
\end{enumerate}

\section{Related work and main contributions}
\label{ssec:related}

\paragraph{Data-driven dimensional analysis.}
The idea of discovering dimensionless groups from data without prior
knowledge of the governing equations has gained considerable attention
in the last decade.
The observation that log-transformed dimensional data lie on a
low-dimensional subspace follows directly from the Buckingham theorem
combined with a logarithmic change of variables (Section~\ref{sec:log}).
Bakarji \& Brunton~\cite{Bakarji2022} were the first to exploit this
structure systematically in a data-driven framework, and
introduced three complementary methods: log-PCA as a baseline,
BuckiNet (a neural network architecture that enforces dimensional
consistency as a structural constraint), and SINDy-$\Pi$ (sparse
identification with a monomial library~\cite{Brunton2016}).
Champion \textit{et al.}~\cite{Champion2019} extended this programme
to the joint discovery of coordinates and governing equations.
Their work is the closest antecedent to the present paper, and we adopt
their notation where convenient.
The key difference is in the \emph{mechanism} by which the null space
is recovered and in the subsequent identification of named $\Pi$ groups:
Bakarji \& Brunton apply global PCA and then enforce integrality through
constrained optimisation or sparse regression, both of which are
approximate and sensitive to noise level.

\paragraph{Active subspaces and ridge functions.}
Constantine, del~Rosario \& Iaccarino~\cite{Constantine2017} showed that active
subspaces --- directions along which a quantity of interest varies most
--- coincide with $\Pi$-group directions when the response function is
dimensionally homogeneous.
Their gradient-based estimator is exact in theory but requires
evaluations of the output quantity and converges only statistically.
These methods are the closest to our within-cluster approach in spirit,
but differ in that they use output data and gradient information rather
than \emph{gauge variation} (a purely kinematic excitation of the
dimensional degrees of freedom).


\paragraph{Gaps in the existing literature and contributions of
this paper.}
Against this background, three contributions appear to be new.

\begin{enumerate}
  \item \textbf{Exact recovery via gauge variation} (Theorem~\ref{thm:recovery}).
    We introduce gauge variation as a \emph{controlled synthetic
    perturbation}: each physical operating condition is realised at
    $M \ge n{+}1$ different dimensional scales by rescaling $N$ and $D$
    freely.
    Within-cluster SVD of the resulting log-data matrix produces a
    singular-value spectrum with an \emph{exact} gap: $\sigma_{n+1} =
    \cdots = \sigma_k = 0$ to floating-point precision.
    This is a provable consequence of the dimensional constraint and not
    a statistical approximation.
    No prior method achieves exact recovery; all existing approaches
    converge only asymptotically in the number of samples or the
    signal-to-noise ratio.

  \item \textbf{Integer lattice search for named Pi groups}
    (Algorithm~\ref{alg:lattice}).
    Once the null-space subspace is recovered, the natural $\Pi$
    groups --- flow coefficient, head coefficient, Mach number,
    Reynolds number, etc.\ --- are generators with smallest norm of
    the integer lattice $\kker(\balpha^T)\cap\Z^k$,
    specifically the Buckingham basis obtained with chosen repeating variables.
    A key observation, which does not appear in earlier work, is that
    these generators are generically \emph{non-orthogonal as exponent
    vectors} (for the turbomachinery example:
    $\mathbf{a}_\fl \cdot \mathbf{a}_\hc = 8 \ne 0$), which means
    that no rotation of the SVD frame --- orthogonal, varimax, or
    otherwise --- can align with all of them simultaneously.
    The integer lattice search is deterministic and exact; it bypasses
    the integrality relaxation that makes optimisation-based methods
    approximate.

  \item \textbf{Unifying parameter space and solution space}
    (Section~\ref{sec:paramsol}).
    The same null-space structure that defines $\Pi$ groups in the
    space of controlling parameters also defines the natural coordinate
    system for Proper Orthogonal Decomposition surrogates in solution
    space.
    Identifying this connection provides a single algebraic framework
    for dimensional reduction at the parameter level and model order
    reduction at the field level, and motivates the use of
    dimensionally-aware latent variables in data-driven surrogate
    modelling.
\end{enumerate}

A comparison with the two closest prior works is given in
Table~\ref{tab:comparison} (Section~\ref{sec:discussion}).

\section{Conclusions}
\label{sec:conclusions}

Physical measurements come tagged with units --- metres, seconds,
kilograms --- and the choice of units is, in a deep sense, arbitrary.
The Buckingham $\Pi$ theorem encodes that arbitrariness as an algebraic
constraint: dimensional quantities must combine into dimensionless
groups, and it is those groups alone that govern the physics.

This paper shows that the same constraint can be \emph{read back from
data} without knowing the governing equations, without knowing the
dimensional exponent matrix, and without any domain expertise beyond
the ability to repeat an experiment at different physical scales.

Three ideas do the work:

\begin{enumerate}
  \item \textbf{Logarithms linearise.}
    Taking the logarithm of dimensional measurements turns the
    multiplicative structure of physical laws into a linear one.
    Data that respects dimensional homogeneity lies on a flat
    low-dimensional subspace in log-space.  That subspace \emph{is}
    the space of dimensionless groups.

  \item \textbf{Scale variation isolates the subspace.}
    Repeating each experimental condition at several physical scales
    (different machine sizes, different fluid speeds) produces clusters
    of points that spread in the ``gauge'' directions but are fixed in
    the ``physics'' directions.  A within-cluster singular value
    decomposition separates the two, recovering the $\Pi$-group subspace
    to machine precision with a provably exact gap in the singular-value
    spectrum.

  \item \textbf{Integer exponents identify the groups.}
    The dimensionless numbers that engineers actually use --- Reynolds
    number, Mach number, flow coefficient --- have whole-number
    exponents.  Searching the data-driven subspace for the shortest
    integer-exponent vectors recovers the named groups directly and
    correctly, where rotation- or sparsity-based alternatives fail.
\end{enumerate}

Applied to a synthetic compressor dataset (5 dimensional variables,
2 base dimensions, 3 $\Pi$ groups, 16\,000 measurements), the
procedure recovers every group to floating-point precision and fits the
performance map with a mean relative error below $0.01\%$.

The practical upshot is a plug-and-play pipeline: collect dimensional
data with controlled scale variation, log-transform, apply
within-cluster SVD, search the integer lattice, fit a surrogate in
$\Pi$-group space.
No physics knowledge is required at any step; the physics is read
out, not put in.

The conceptual upshot is a bridge between two bodies of knowledge
that are rarely taught together.
Classical dimensional analysis and modern latent-variable methods ---
PCA, POD, equivariant neural networks --- share the same algebraic
skeleton.
Recognising that skeleton makes it possible to build machine-learning
models that are guaranteed to be dimensionally consistent, and to
interpret the latent coordinates of a data-driven model as physical
dimensionless groups rather than abstract statistical directions.


\appendix

\section{Notation Summary}
\label{app:notation}

\begin{center}
\renewcommand{\arraystretch}{1.3}
\begin{tabular}{p{24mm}p{105mm}}
  \toprule
  Symbol & Meaning\\
  \midrule
  $k$           & Number of dimensional variables\\
  $n$           & Number of independent base units\\
  $\balpha$     & Dimensional exponent matrix, $\balpha\in\R^{k\times n}$\\
  $\kker(\balpha^T)$ & Null space of $\balpha^T$; $\dim=k-n$ ($\Pi$-group subspace)\\
  $\col(\balpha)$    & Column space of $\balpha$; $\dim=n$ (gauge subspace)\\
  $\mathcal{L}$ & Integer lattice $\kker(\balpha^T)\cap\Z^k$\\
  $X$           & Raw data matrix, $X\in\R^{m\times k}$; entry $(i,j)$ = $j$-th variable in experiment $i$\\
  $Y$           & Log-data matrix, $Y=\log X\in\R^{m\times k}$ (entry-wise)\\
  $\Delta Y$    & Within-cluster deviation matrix\\
  $\mathbf{W},\bSigma,\mathbf{V}$ & Left singular matrix, singular values, right singular matrix of centred $Y-\bar Y$ (Sec.~\ref{sec:svd})\\
  $V_{\ker}$    & Last $k-n$ columns of $\mathbf{V}$; basis for $\kker(\balpha^T)$\\
  $P$           & Projector $V_{\ker}V_{\ker}^T$ onto $\kker(\balpha^T)$\\
  $\ba$         & Integer exponent vector; $\ba\in\Z^k$ with $\balpha^T\ba=\bzero$ denotes a $\Pi$ group\\
  $\varepsilon$ & Membership tolerance for integer lattice search ($\sim10^{-6}$)\\
  $\boldsymbol{\lambda}$ & Gauge scale factors; $\boldsymbol{\lambda}\in\R^n_{>0}$ parameterises unit rescaling\\
  $\boldsymbol{\pi}$ & Vector of $\Pi$-group values, $\boldsymbol{\pi}=(\Pi_1,\ldots,\Pi_{k-n})$\\
  $\bU$         & Snapshot matrix in POD context, $\bU\in\R^{N\times m}$ (Sec.~\ref{sec:paramsol})\\
  $\bPhi,\mathbf{A}$ & POD mode matrix and modal-coefficient matrix; $\bU=\bPhi\bSigma\mathbf{A}^T$\\
  $N$           & Spatial degrees of freedom in POD (Secs.~\ref{sec:intro},~\ref{sec:paramsol}); rotational speed (Sec.~\ref{sec:testcase})\\
  $r$           & Number of retained POD modes, $r\ll N$\\
  $\fl,\hc,\Mach$ & Flow coefficient, head coefficient, machine Mach number\\
  $N_c$         & Number of clusters (operating points)\\
  $M$           & Gauge realisations per cluster\\
  $m$           & Total number of data rows, $m = N_c\cdot M$\\
  $m_{\max}$    & Maximum exponent magnitude in integer lattice search\\
  $G$           & Dimensional scaling group (see Open research directions, Section~\ref{sec:discussion})\\
  \bottomrule
\end{tabular}
\end{center}

\clearpage
\section{Full integer-lattice catalogue ($m_{\max}=4$)}
\label{app:lattice_full}

Table~\ref{tab:lattice_full} lists all 69 primitive integer vectors in
$\kker(\balpha^T)\cap\Z^5$ found by the search described in
Algorithm~\ref{alg:lattice} with $m_{\max}=4$.
Vectors are ordered by Euclidean norm $\|\mathbf{a}\|$.
Each row gives the exponent vector $(a_Q,a_H,a_N,a_D,a_a)$, the norm,
and the corresponding dimensionless monomial.
The three textbook turbomachinery $\Pi$ groups --- identified by
the repeating-variable filter of Algorithm~\ref{alg:lattice}
(step 6, with repeating variables $\mathcal{R}=\{N,D\}$) --- are
highlighted in bold.
Note that they are \emph{not} the three globally smallest-norm vectors:
several products of the named groups have smaller $\ell_2$ norm
(e.g.\ $H/a^2$, row~4, which is equivalent to $\hc\,\Mach^2$ and is
excluded because it involves two non-repeating variables $H$ and $a$).

\begingroup
\renewcommand{\arraystretch}{1.45}
\setlength{\tabcolsep}{5pt}
\begin{longtable}{r r r r r r r c l c}
\caption{Complete list of primitive integer vectors in
  $\kker(\balpha^T)\cap\Z^5$, ordered by $(\ell_2)$ norm ($m_{\max}=4$).
  Bold rows are the three named $\Pi$ groups.
  The Filter column shows the outcome of the repeating-variable test
  (Algorithm~\ref{alg:lattice}, step~6, $\mathcal{R}=\{N,D\}$):
  \checkmark\ = both conditions pass;
  (a)\,$\times$ = no repeating variable present;
  (b)\,$\times$ = more than one non-repeating variable present.}
\label{tab:lattice_full}\\
\hline
\# & $a_Q$ & $a_H$ & $a_N$ & $a_D$ & $a_a$ & $\|\mathbf{a}\|$ & & Monomial & Filter \\
\hline
\endfirsthead
\hline
\# & $a_Q$ & $a_H$ & $a_N$ & $a_D$ & $a_a$ & $\|\mathbf{a}\|$ & & Monomial & Filter \\
\hline
\endhead
\hline
\endfoot
\hline
\endlastfoot
\textbf{1} & \textbf{0} & \textbf{0} & \textbf{1} & \textbf{1} & $\mathbf{-1}$ & \textbf{1.732} & $\leftarrow\Mach$ & $\displaystyle \dfrac{ND}{a}$ & \checkmark \\[0.4em]
2 & 1 & $-1$ & 1 & $-1$ & 0 & 2.000 &  & $\displaystyle \dfrac{QN}{HD}$ & (b)\,$\times$ \\[0.4em]
3 & 0 & 1 & $-1$ & $-1$ & $-1$ & 2.000 &  & $\displaystyle \dfrac{H}{NDa}$ & (b)\,$\times$ \\[0.4em]
4 & 0 & 1 & 0 & 0 & $-2$ & 2.236 &  & $\displaystyle \dfrac{H}{a^2}$ & (a)\,$\times$ \\[0.4em]
5 & 1 & 0 & 0 & $-2$ & $-1$ & 2.449 &  & $\displaystyle \dfrac{Q}{D^2 a}$ & (b)\,$\times$ \\[0.4em]
6 & 1 & 0 & 1 & $-1$ & $-2$ & 2.646 &  & $\displaystyle \dfrac{QN}{Da^2}$ & (b)\,$\times$ \\[0.4em]
7 & 1 & $-1$ & 2 & 0 & $-1$ & 2.646 &  & $\displaystyle \dfrac{QN^2}{Ha}$ & (b)\,$\times$ \\[0.4em]
8 & 1 & $-1$ & 0 & $-2$ & 1 & 2.646 &  & $\displaystyle \dfrac{Qa}{HD^2}$ & (b)\,$\times$ \\[0.4em]
\textbf{9} & \textbf{0} & \textbf{1} & $\mathbf{-2}$ & $\mathbf{-2}$ & \textbf{0} & \textbf{3.000} & $\leftarrow\hc$ & $\displaystyle \dfrac{H}{N^2D^2}$ & \checkmark \\[0.4em]
10 & 1 & $-2$ & 2 & 0 & 1 & 3.162 &  & $\displaystyle \dfrac{QN^2a}{H^2}$ & (b)\,$\times$ \\[0.4em]
11 & 1 & $-2$ & 1 & $-1$ & 2 & 3.317 &  & $\displaystyle \dfrac{QNa^2}{H^2D}$ & (b)\,$\times$ \\[0.4em]
\textbf{12} & \textbf{1} & \textbf{0} & $\mathbf{-1}$ & $\mathbf{-3}$ & \textbf{0} & \textbf{3.317} & $\leftarrow\fl$ & $\displaystyle \dfrac{Q}{ND^3}$ & \checkmark \\[0.4em]
13 & 0 & 1 & 1 & 1 & $-3$ & 3.464 &  & $\displaystyle \dfrac{HND}{a^3}$ & (b)\,$\times$ \\[0.4em]
14 & 1 & 0 & 2 & 0 & $-3$ & 3.742 &  & $\displaystyle \dfrac{QN^2}{a^3}$ & (b)\,$\times$ \\[0.4em]
15 & 1 & 1 & 0 & $-2$ & $-3$ & 3.873 &  & $\displaystyle \dfrac{QH}{D^2a^3}$ & (b)\,$\times$ \\[0.4em]
16 & 1 & $-2$ & 3 & 1 & 0 & 3.873 &  & $\displaystyle \dfrac{QN^3D}{H^2}$ & (b)\,$\times$ \\[0.4em]
17 & 0 & 2 & $-1$ & $-1$ & $-3$ & 3.873 &  & $\displaystyle \dfrac{H^2}{NDa^3}$ & (b)\,$\times$ \\[0.4em]
18 & 2 & $-1$ & 1 & $-3$ & $-1$ & 4.000 &  & $\displaystyle \dfrac{Q^2N}{HD^3a}$ & (b)\,$\times$ \\[0.4em]
19 & 1 & 1 & $-1$ & $-3$ & $-2$ & 4.000 &  & $\displaystyle \dfrac{QH}{ND^3a^2}$ & (b)\,$\times$ \\[0.4em]
20 & 1 & $-1$ & 3 & 1 & $-2$ & 4.000 &  & $\displaystyle \dfrac{QN^3D}{Ha^2}$ & (b)\,$\times$ \\[0.4em]
21 & 1 & $-1$ & $-1$ & $-3$ & 2 & 4.000 &  & $\displaystyle \dfrac{Qa^2}{HND^3}$ & (b)\,$\times$ \\[0.4em]
22 & 2 & $-1$ & 2 & $-2$ & $-2$ & 4.123 &  & $\displaystyle \dfrac{Q^2N^2}{HD^2a^2}$ & (b)\,$\times$ \\[0.4em]
23 & 1 & $-2$ & 0 & $-2$ & 3 & 4.243 &  & $\displaystyle \dfrac{Qa^3}{H^2D^2}$ & (b)\,$\times$ \\[0.4em]
24 & 2 & $-2$ & 3 & $-1$ & $-1$ & 4.359 &  & $\displaystyle \dfrac{Q^2N^3}{H^2Da}$ & (b)\,$\times$ \\[0.4em]
25 & 2 & $-2$ & 1 & $-3$ & 1 & 4.359 &  & $\displaystyle \dfrac{Q^2Na}{H^2D^3}$ & (b)\,$\times$ \\[0.4em]
26 & 1 & 1 & 1 & $-1$ & $-4$ & 4.472 &  & $\displaystyle \dfrac{QHN}{Da^4}$ & (b)\,$\times$ \\[0.4em]
27 & 0 & 1 & $-3$ & $-3$ & 1 & 4.472 &  & $\displaystyle \dfrac{Ha}{N^3D^3}$ & (b)\,$\times$ \\[0.4em]
28 & 2 & $-1$ & 0 & $-4$ & 0 & 4.583 &  & $\displaystyle \dfrac{Q^2}{HD^4}$ & (b)\,$\times$ \\[0.4em]
29 & 1 & 0 & $-2$ & $-4$ & 1 & 4.690 &  & $\displaystyle \dfrac{Qa}{N^2D^4}$ & (b)\,$\times$ \\[0.4em]
30 & 2 & 0 & 1 & $-3$ & $-3$ & 4.796 &  & $\displaystyle \dfrac{Q^2N}{D^3a^3}$ & (b)\,$\times$ \\[0.4em]
31 & 1 & 1 & $-2$ & $-4$ & $-1$ & 4.796 &  & $\displaystyle \dfrac{QH}{N^2D^4a}$ & (b)\,$\times$ \\[0.4em]
32 & 1 & $-3$ & 2 & 0 & 3 & 4.796 &  & $\displaystyle \dfrac{QN^2a^3}{H^3}$ & (b)\,$\times$ \\[0.4em]
33 & 0 & 2 & $-3$ & $-3$ & $-1$ & 4.796 &  & $\displaystyle \dfrac{H^2}{N^3D^3a}$ & (b)\,$\times$ \\[0.4em]
34 & 2 & $-1$ & 3 & $-1$ & $-3$ & 4.899 &  & $\displaystyle \dfrac{Q^2N^3}{HDa^3}$ & (b)\,$\times$ \\[0.4em]
35 & 2 & $-3$ & 3 & $-1$ & 1 & 4.899 &  & $\displaystyle \dfrac{Q^2N^3a}{H^3D}$ & (b)\,$\times$ \\[0.4em]
36 & 1 & $-3$ & 3 & 1 & 2 & 4.899 &  & $\displaystyle \dfrac{QN^3Da^2}{H^3}$ & (b)\,$\times$ \\[0.4em]
37 & 2 & $-3$ & 2 & $-2$ & 2 & 5.000 &  & $\displaystyle \dfrac{Q^2N^2a^2}{H^3D^2}$ & (b)\,$\times$ \\[0.4em]
38 & 0 & 1 & 2 & 2 & $-4$ & 5.000 &  & $\displaystyle \dfrac{HN^2D^2}{a^4}$ & (b)\,$\times$ \\[0.4em]
39 & 1 & $-2$ & 4 & 2 & $-1$ & 5.099 &  & $\displaystyle \dfrac{QN^4D^2}{H^2a}$ & (b)\,$\times$ \\[0.4em]
40 & 1 & 0 & 3 & 1 & $-4$ & 5.196 &  & $\displaystyle \dfrac{QN^3D}{a^4}$ & (b)\,$\times$ \\[0.4em]
41 & 1 & $-3$ & 1 & $-1$ & 4 & 5.292 &  & $\displaystyle \dfrac{QNa^4}{H^3D}$ & (b)\,$\times$ \\[0.4em]
42 & 2 & $-3$ & 4 & 0 & 0 & 5.385 &  & $\displaystyle \dfrac{Q^2N^4}{H^3}$ & (b)\,$\times$ \\[0.4em]
43 & 1 & 2 & $-1$ & $-3$ & $-4$ & 5.568 &  & $\displaystyle \dfrac{QH^2}{ND^3a^4}$ & (b)\,$\times$ \\[0.4em]
44 & 1 & $-1$ & 4 & 2 & $-3$ & 5.568 &  & $\displaystyle \dfrac{QN^4D^2}{Ha^3}$ & (b)\,$\times$ \\[0.4em]
45 & 1 & $-1$ & $-2$ & $-4$ & 3 & 5.568 &  & $\displaystyle \dfrac{Qa^3}{HN^2D^4}$ & (b)\,$\times$ \\[0.4em]
46 & 1 & $-2$ & $-1$ & $-3$ & 4 & 5.568 &  & $\displaystyle \dfrac{Qa^4}{H^2ND^3}$ & (b)\,$\times$ \\[0.4em]
47 & 1 & $-3$ & 4 & 2 & 1 & 5.568 &  & $\displaystyle \dfrac{QN^4D^2a}{H^3}$ & (b)\,$\times$ \\[0.4em]
48 & 2 & $-3$ & 1 & $-3$ & 3 & 5.657 &  & $\displaystyle \dfrac{Q^2Na^3}{H^3D^3}$ & (b)\,$\times$ \\[0.4em]
49 & 0 & 3 & $-2$ & $-2$ & $-4$ & 5.745 &  & $\displaystyle \dfrac{H^3}{N^2D^2a^4}$ & (b)\,$\times$ \\[0.4em]
50 & 3 & $-2$ & 2 & $-4$ & $-1$ & 5.831 &  & $\displaystyle \dfrac{Q^3N^2}{H^2D^4a}$ & (b)\,$\times$ \\[0.4em]
51 & 1 & 2 & $-2$ & $-4$ & $-3$ & 5.831 &  & $\displaystyle \dfrac{QH^2}{N^2D^4a^3}$ & (b)\,$\times$ \\[0.4em]
52 & 3 & $-2$ & 3 & $-3$ & $-2$ & 5.916 &  & $\displaystyle \dfrac{Q^3N^3}{H^2D^3a^2}$ & (b)\,$\times$ \\[0.4em]
53 & 2 & 1 & 0 & $-4$ & $-4$ & 6.083 &  & $\displaystyle \dfrac{Q^2H}{D^4a^4}$ & (b)\,$\times$ \\[0.4em]
54 & 2 & $-1$ & 4 & 0 & $-4$ & 6.083 &  & $\displaystyle \dfrac{Q^2N^4}{Ha^4}$ & (b)\,$\times$ \\[0.4em]
55 & 0 & 1 & $-4$ & $-4$ & 2 & 6.083 &  & $\displaystyle \dfrac{Ha^2}{N^4D^4}$ & (b)\,$\times$ \\[0.4em]
56 & 3 & $-1$ & 2 & $-4$ & $-3$ & 6.245 &  & $\displaystyle \dfrac{Q^3N^2}{HD^4a^3}$ & (b)\,$\times$ \\[0.4em]
57 & 3 & $-3$ & 4 & $-2$ & $-1$ & 6.245 &  & $\displaystyle \dfrac{Q^3N^4}{H^3D^2a}$ & (b)\,$\times$ \\[0.4em]
58 & 3 & $-3$ & 2 & $-4$ & 1 & 6.245 &  & $\displaystyle \dfrac{Q^3N^2a}{H^3D^4}$ & (b)\,$\times$ \\[0.4em]
59 & 2 & $-4$ & 3 & $-1$ & 3 & 6.245 &  & $\displaystyle \dfrac{Q^2N^3a^3}{H^4D}$ & (b)\,$\times$ \\[0.4em]
60 & 3 & $-2$ & 4 & $-2$ & $-3$ & 6.481 &  & $\displaystyle \dfrac{Q^3N^4}{H^2D^2a^3}$ & (b)\,$\times$ \\[0.4em]
61 & 1 & $-4$ & 3 & 1 & 4 & 6.557 &  & $\displaystyle \dfrac{QN^3Da^4}{H^4}$ & (b)\,$\times$ \\[0.4em]
62 & 3 & $-1$ & 3 & $-3$ & $-4$ & 6.633 &  & $\displaystyle \dfrac{Q^3N^3}{HD^3a^4}$ & (b)\,$\times$ \\[0.4em]
63 & 2 & $-3$ & 0 & $-4$ & 4 & 6.708 &  & $\displaystyle \dfrac{Q^2a^4}{H^3D^4}$ & (b)\,$\times$ \\[0.4em]
64 & 0 & 3 & $-4$ & $-4$ & $-2$ & 6.708 &  & $\displaystyle \dfrac{H^3}{N^4D^4a^2}$ & (b)\,$\times$ \\[0.4em]
65 & 3 & $-4$ & 4 & $-2$ & 1 & 6.782 &  & $\displaystyle \dfrac{Q^3N^4a}{H^4D^2}$ & (b)\,$\times$ \\[0.4em]
66 & 1 & $-4$ & 4 & 2 & 3 & 6.782 &  & $\displaystyle \dfrac{QN^4D^2a^3}{H^4}$ & (b)\,$\times$ \\[0.4em]
67 & 3 & $-4$ & 3 & $-3$ & 2 & 6.856 &  & $\displaystyle \dfrac{Q^3N^3a^2}{H^4D^3}$ & (b)\,$\times$ \\[0.4em]
68 & 3 & $-4$ & 2 & $-4$ & 3 & 7.348 &  & $\displaystyle \dfrac{Q^3N^2a^3}{H^4D^4}$ & (b)\,$\times$ \\[0.4em]
69 & 4 & $-3$ & 4 & $-4$ & $-2$ & 7.810 &  & $\displaystyle \dfrac{Q^4N^4}{H^3D^4a^2}$ & (b)\,$\times$ \\[0.4em]
\end{longtable}
\endgroup

\clearpage
\section{Script output: \texttt{turbomachine\_pi\_test.py}}
\label{app:scriptlog}

The listing below is the verbatim console output produced by running
\texttt{turbomachine\_pi\_test.py}  with
the parameters described in Section~\ref{sec:testcase}:
$k=5$ variables, $n=2$ base units, $N_c=400$ operating points,
$M=40$ gauge realisations per point ($m=16{,}000$ total rows),
and $m_{\max}=4$ for the integer lattice search.
The repeating variables used in Step~4 are $\mathcal{R}=\{N,D\}$.

\begin{lstlisting}[style=scriptlog]
=================================================================
PREAMBLE — DIMENSIONAL SETUP
=================================================================

  k = 5 variables,  n = 2 base units,  rank(alpha) = 2  ->  3 Pi groups

  Pi-group exponent vectors (verified in ker(alpha^T)):
    phi  = Q / (N D^3)          ||alpha^T e|| = 0.0e+00
    psi  = H / (N^2 D^2)        ||alpha^T e|| = 0.0e+00
    Ma   = N D / a              ||alpha^T e|| = 0.0e+00

=================================================================
STEPS 1-2 — DATA GENERATION AND LOG-TRANSFORM
=================================================================

  Operating-point grid : 25 x 20 (some clipped) -> 400 valid points
  Gauge realisations   : M = 40 per point
  Total data rows      : 16000
  phi range : [0.115, 0.430]
  Ma  range : [0.200, 0.650]
  psi range : [0.0061, 0.5385]

=================================================================
STEP 3 — WITHIN-CLUSTER SVD  ->  ker(alpha^T) RECOVERY
=================================================================

  Theory: within cluster c, Y[i,:] = Y_mean[c] + alpha @ delta_gauge[i]
          -> deviations dY live in col(alpha)  (n-dimensional)
          -> SVD(dY): first n right singular vectors span col(alpha)
          -> last (k-n) right singular vectors span ker(alpha^T)

  Within-cluster singular values (expect gap at index 2):
    sigma_1 =   794.245226 <- gauge  (col alpha)
    sigma_2 =   276.988974 <- gauge  (col alpha)
    sigma_3 =     0.000000 <- ~0   (ker alpha^T)
    sigma_4 =     0.000000 <- ~0   (ker alpha^T)
    sigma_5 =     0.000000 <- ~0   (ker alpha^T)

  Canonical angles between recovered and analytic ker(alpha^T):
    theta_1 = 0.00000000 degrees
    theta_2 = 0.00000000 degrees
    theta_3 = 0.00000000 degrees

=================================================================
STEP 4 — INTEGER LATTICE SEARCH AND REPEATING-VARIABLE FILTER
=================================================================

  The within-cluster SVD gives an orthonormal basis for ker(alpha^T).
  The columns of V_ker_rec are NOT the exponent vectors of phi, psi, Ma.
  They span the same subspace, but the natural Pi groups are generally
  not orthogonal and cannot be read off directly from the eigenvectors.

  We address three questions:
    (1) Do the known Pi-group exponents lie in the recovered subspace?
    (2) Why can no orthogonal rotation recover phi, psi, Ma exactly?
    (3) Can we still re-discover phi, psi, Ma from data alone?

  Step 1 — Projection test
  Each known Pi-group exponent vector must lie in span(V_ker_rec).
  We measure ||a - P a|| where P = V_ker_rec @ V_ker_rec^T.

    phi :  ||a - proj(a)|| = 1.71e-15    OK — lies in ker(alpha^T)
    psi :  ||a - proj(a)|| = 4.83e-15    OK — lies in ker(alpha^T)
    Ma  :  ||a - proj(a)|| = 9.02e-16    OK — lies in ker(alpha^T)

  Step 2 — Are the natural Pi-group exponents orthogonal?
  SVD gives an ORTHONORMAL basis; orthogonal rotation preserves that.
  So if phi, psi, Ma are not mutually orthogonal, no rotation can align
  the SVD basis with them simultaneously.

    a_phi . a_psi = 8    <- NOT orthogonal
    a_phi . a_Ma = -4    <- NOT orthogonal
    a_psi . a_Ma = -4    <- NOT orthogonal

  Since a_phi . a_psi != 0, no orthogonal rotation of V_ker_rec
  can simultaneously align one column with phi and another with psi.
  Methods like varimax will find spurious non-integer combinations.

  Step 3 — Integer lattice search (data-driven Pi-group discovery)
  Correct approach: search for INTEGER vectors a with small entries
  such that a lies in span(V_ker_rec), i.e. ||a - P a|| ~ 0.

  Found 69 distinct integer vectors in ker(alpha^T)  (max entry magnitude = 4):

      Q      H      N      D      a   l0   l2      interpretation
  [   +0     +1     +0     +0     -2]   2   2.236  H/(a^2)
  [   +0     +0     +1     +1     -1]   3   1.732  N*D/(a)  <- Ma
  [   +1     +0     +0     -2     -1]   3   2.449  Q/(D^2*a)
  [   +0     +1     -2     -2     +0]   3   3.000  H/(N^2*D^2)  <- psi
  [   +1     +0     -1     -3     +0]   3   3.317  Q/(N*D^3)  <- phi
  [   +1     +0     +2     +0     -3]   3   3.742  Q*N^2/(a^3)
  [   +2     -1     +0     -4     +0]   3   4.583  Q^2/(H*D^4)
  [   +2     -3     +4     +0     +0]   3   5.385  Q^2*N^4/(H^3)
  [   +1     -1     +1     -1     +0]   4   2.000  Q*N/(H*D)
  [   +0     +1     -1     -1     -1]   4   2.000  H/(N*D*a)
  [   +1     +0     +1     -1     -2]   4   2.646  Q*N/(D*a^2)
  [   +1     -1     +2     +0     -1]   4   2.646  Q*N^2/(H*a)
  [   +1     -1     +0     -2     +1]   4   2.646  Q*a/(H*D^2)
  [   +1     -2     +2     +0     +1]   4   3.162  Q*N^2*a/(H^2)
  [   +0     +1     +1     +1     -3]   4   3.464  H*N*D/(a^3)
  [   +1     +1     +0     -2     -3]   4   3.873  Q*H/(D^2*a^3)
  [   +1     -2     +3     +1     +0]   4   3.873  Q*N^3*D/(H^2)
  [   +0     +2     -1     -1     -3]   4   3.873  H^2/(N*D*a^3)
  [   +1     -2     +0     -2     +3]   4   4.243  Q*a^3/(H^2*D^2)
  [   +0     +1     -3     -3     +1]   4   4.472  H*a/(N^3*D^3)
  [   +1     +0     -2     -4     +1]   4   4.690  Q*a/(N^2*D^4)
  [   +2     +0     +1     -3     -3]   4   4.796  Q^2*N/(D^3*a^3)
  [   +1     -3     +2     +0     +3]   4   4.796  Q*N^2*a^3/(H^3)
  [   +0     +2     -3     -3     -1]   4   4.796  H^2/(N^3*D^3*a)
  [   +0     +1     +2     +2     -4]   4   5.000  H*N^2*D^2/(a^4)
  [   +1     +0     +3     +1     -4]   4   5.196  Q*N^3*D/(a^4)
  [   +0     +3     -2     -2     -4]   4   5.745  H^3/(N^2*D^2*a^4)
  [   +2     +1     +0     -4     -4]   4   6.083  Q^2*H/(D^4*a^4)
  [   +2     -1     +4     +0     -4]   4   6.083  Q^2*N^4/(H*a^4)
  [   +0     +1     -4     -4     +2]   4   6.083  H*a^2/(N^4*D^4)
  [   +2     -3     +0     -4     +4]   4   6.708  Q^2*a^4/(H^3*D^4)
  [   +0     +3     -4     -4     -2]   4   6.708  H^3/(N^4*D^4*a^2)
  [   +1     -2     +1     -1     +2]   5   3.317  Q*N*a^2/(H^2*D)
  [   +2     -1     +1     -3     -1]   5   4.000  Q^2*N/(H*D^3*a)
  [   +1     +1     -1     -3     -2]   5   4.000  Q*H/(N*D^3*a^2)
  [   +1     -1     +3     +1     -2]   5   4.000  Q*N^3*D/(H*a^2)
  [   +1     -1     -1     -3     +2]   5   4.000  Q*a^2/(H*N*D^3)
  [   +2     -1     +2     -2     -2]   5   4.123  Q^2*N^2/(H*D^2*a^2)
  [   +2     -2     +3     -1     -1]   5   4.359  Q^2*N^3/(H^2*D*a)
  [   +2     -2     +1     -3     +1]   5   4.359  Q^2*N*a/(H^2*D^3)
  [   +1     +1     +1     -1     -4]   5   4.472  Q*H*N/(D*a^4)
  [   +1     +1     -2     -4     -1]   5   4.796  Q*H/(N^2*D^4*a)
  [   +2     -1     +3     -1     -3]   5   4.899  Q^2*N^3/(H*D*a^3)
  [   +2     -3     +3     -1     +1]   5   4.899  Q^2*N^3*a/(H^3*D)
  [   +1     -3     +3     +1     +2]   5   4.899  Q*N^3*D*a^2/(H^3)
  [   +2     -3     +2     -2     +2]   5   5.000  Q^2*N^2*a^2/(H^3*D^2)
  [   +1     -2     +4     +2     -1]   5   5.099  Q*N^4*D^2/(H^2*a)
  [   +1     -3     +1     -1     +4]   5   5.292  Q*N*a^4/(H^3*D)
  [   +1     +2     -1     -3     -4]   5   5.568  Q*H^2/(N*D^3*a^4)
  [   +1     -1     +4     +2     -3]   5   5.568  Q*N^4*D^2/(H*a^3)
  [   +1     -1     -2     -4     +3]   5   5.568  Q*a^3/(H*N^2*D^4)
  [   +1     -2     -1     -3     +4]   5   5.568  Q*a^4/(H^2*N*D^3)
  [   +1     -3     +4     +2     +1]   5   5.568  Q*N^4*D^2*a/(H^3)
  [   +2     -3     +1     -3     +3]   5   5.657  Q^2*N*a^3/(H^3*D^3)
  [   +3     -2     +2     -4     -1]   5   5.831  Q^3*N^2/(H^2*D^4*a)
  [   +1     +2     -2     -4     -3]   5   5.831  Q*H^2/(N^2*D^4*a^3)
  [   +3     -2     +3     -3     -2]   5   5.916  Q^3*N^3/(H^2*D^3*a^2)
  [   +3     -1     +2     -4     -3]   5   6.245  Q^3*N^2/(H*D^4*a^3)
  [   +3     -3     +4     -2     -1]   5   6.245  Q^3*N^4/(H^3*D^2*a)
  [   +3     -3     +2     -4     +1]   5   6.245  Q^3*N^2*a/(H^3*D^4)
  [   +2     -4     +3     -1     +3]   5   6.245  Q^2*N^3*a^3/(H^4*D)
  [   +3     -2     +4     -2     -3]   5   6.481  Q^3*N^4/(H^2*D^2*a^3)
  [   +1     -4     +3     +1     +4]   5   6.557  Q*N^3*D*a^4/(H^4)
  [   +3     -1     +3     -3     -4]   5   6.633  Q^3*N^3/(H*D^3*a^4)
  [   +3     -4     +4     -2     +1]   5   6.782  Q^3*N^4*a/(H^4*D^2)
  [   +1     -4     +4     +2     +3]   5   6.782  Q*N^4*D^2*a^3/(H^4)
  [   +3     -4     +3     -3     +2]   5   6.856  Q^3*N^3*a^2/(H^4*D^3)
  [   +3     -4     +2     -4     +3]   5   7.348  Q^3*N^2*a^3/(H^4*D^4)
  [   +4     -3     +4     -4     -2]   5   7.810  Q^4*N^4/(H^3*D^4*a^2)

  NOTE: the named Pi groups (phi, psi, Ma) are NOT the three globally
  smallest-norm vectors.  Several products of the named groups (e.g.
  phi/psi, H/a^2 = psi*Ma^2) have smaller l2 norm.  The correct
  identification criterion is the repeating-variable filter below.

  Step 4 — Repeating-variable filter (Algorithm step 6)
  User specifies the n=2 characteristic scales (repeating variables).
  Here: R = {N, D}  (indices 2 and 3).

  Condition (a): vector involves at least one repeating variable (N or D).
  Condition (b): vector involves exactly one non-repeating variable
                 (i.e. exactly one of {Q, H, a} has a non-zero exponent).

      Q      H      N      D      a   l0   l2      monomial              filter
  [   +0     +1     +0     +0     -2]   2   2.236  H/(a^2)               FAIL (a): no repeating variable
  [   +0     +0     +1     +1     -1]   3   1.732  N*D/(a)               PASS  <- Ma
  [   +1     +0     +0     -2     -1]   3   2.449  Q/(D^2*a)             FAIL (b): 2 non-repeating variables
  [   +0     +1     -2     -2     +0]   3   3.000  H/(N^2*D^2)           PASS  <- psi
  [   +1     +0     -1     -3     +0]   3   3.317  Q/(N*D^3)             PASS  <- phi
  [   +1     +0     +2     +0     -3]   3   3.742  Q*N^2/(a^3)           FAIL (b): 2 non-repeating variables
  [   +2     -1     +0     -4     +0]   3   4.583  Q^2/(H*D^4)           FAIL (b): 2 non-repeating variables
  [   +2     -3     +4     +0     +0]   3   5.385  Q^2*N^4/(H^3)         FAIL (b): 2 non-repeating variables
  [   +1     -1     +1     -1     +0]   4   2.000  Q*N/(H*D)             FAIL (b): 2 non-repeating variables
  [   +0     +1     -1     -1     -1]   4   2.000  H/(N*D*a)             FAIL (b): 2 non-repeating variables
  [   +1     +0     +1     -1     -2]   4   2.646  Q*N/(D*a^2)           FAIL (b): 2 non-repeating variables
  [   +1     -1     +2     +0     -1]   4   2.646  Q*N^2/(H*a)           FAIL (b): 3 non-repeating variables
  [   +1     -1     +0     -2     +1]   4   2.646  Q*a/(H*D^2)           FAIL (b): 3 non-repeating variables
  [   +1     -2     +2     +0     +1]   4   3.162  Q*N^2*a/(H^2)         FAIL (b): 3 non-repeating variables
  [   +0     +1     +1     +1     -3]   4   3.464  H*N*D/(a^3)           FAIL (b): 2 non-repeating variables
  [   +1     +1     +0     -2     -3]   4   3.873  Q*H/(D^2*a^3)         FAIL (b): 3 non-repeating variables
  [   +1     -2     +3     +1     +0]   4   3.873  Q*N^3*D/(H^2)         FAIL (b): 2 non-repeating variables
  [   +0     +2     -1     -1     -3]   4   3.873  H^2/(N*D*a^3)         FAIL (b): 2 non-repeating variables
  [   +1     -2     +0     -2     +3]   4   4.243  Q*a^3/(H^2*D^2)       FAIL (b): 3 non-repeating variables
  [   +0     +1     -3     -3     +1]   4   4.472  H*a/(N^3*D^3)         FAIL (b): 2 non-repeating variables
  [   +1     +0     -2     -4     +1]   4   4.690  Q*a/(N^2*D^4)         FAIL (b): 2 non-repeating variables
  [   +2     +0     +1     -3     -3]   4   4.796  Q^2*N/(D^3*a^3)       FAIL (b): 2 non-repeating variables
  [   +1     -3     +2     +0     +3]   4   4.796  Q*N^2*a^3/(H^3)       FAIL (b): 3 non-repeating variables
  [   +0     +2     -3     -3     -1]   4   4.796  H^2/(N^3*D^3*a)       FAIL (b): 2 non-repeating variables
  [   +0     +1     +2     +2     -4]   4   5.000  H*N^2*D^2/(a^4)       FAIL (b): 2 non-repeating variables
  [   +1     +0     +3     +1     -4]   4   5.196  Q*N^3*D/(a^4)         FAIL (b): 2 non-repeating variables
  [   +0     +3     -2     -2     -4]   4   5.745  H^3/(N^2*D^2*a^4)     FAIL (b): 2 non-repeating variables
  [   +2     +1     +0     -4     -4]   4   6.083  Q^2*H/(D^4*a^4)       FAIL (b): 3 non-repeating variables
  [   +2     -1     +4     +0     -4]   4   6.083  Q^2*N^4/(H*a^4)       FAIL (b): 3 non-repeating variables
  [   +0     +1     -4     -4     +2]   4   6.083  H*a^2/(N^4*D^4)       FAIL (b): 2 non-repeating variables
  [   +2     -3     +0     -4     +4]   4   6.708  Q^2*a^4/(H^3*D^4)     FAIL (b): 3 non-repeating variables
  [   +0     +3     -4     -4     -2]   4   6.708  H^3/(N^4*D^4*a^2)     FAIL (b): 2 non-repeating variables
  [   +1     -2     +1     -1     +2]   5   3.317  Q*N*a^2/(H^2*D)       FAIL (b): 3 non-repeating variables
  [   +2     -1     +1     -3     -1]   5   4.000  Q^2*N/(H*D^3*a)       FAIL (b): 3 non-repeating variables
  [   +1     +1     -1     -3     -2]   5   4.000  Q*H/(N*D^3*a^2)       FAIL (b): 3 non-repeating variables
  [   +1     -1     +3     +1     -2]   5   4.000  Q*N^3*D/(H*a^2)       FAIL (b): 3 non-repeating variables
  [   +1     -1     -1     -3     +2]   5   4.000  Q*a^2/(H*N*D^3)       FAIL (b): 3 non-repeating variables
  [   +2     -1     +2     -2     -2]   5   4.123  Q^2*N^2/(H*D^2*a^2)   FAIL (b): 3 non-repeating variables
  [   +2     -2     +3     -1     -1]   5   4.359  Q^2*N^3/(H^2*D*a)     FAIL (b): 3 non-repeating variables
  [   +2     -2     +1     -3     +1]   5   4.359  Q^2*N*a/(H^2*D^3)     FAIL (b): 3 non-repeating variables
  [   +1     +1     +1     -1     -4]   5   4.472  Q*H*N/(D*a^4)         FAIL (b): 3 non-repeating variables
  [   +1     +1     -2     -4     -1]   5   4.796  Q*H/(N^2*D^4*a)       FAIL (b): 3 non-repeating variables
  [   +2     -1     +3     -1     -3]   5   4.899  Q^2*N^3/(H*D*a^3)     FAIL (b): 3 non-repeating variables
  [   +2     -3     +3     -1     +1]   5   4.899  Q^2*N^3*a/(H^3*D)     FAIL (b): 3 non-repeating variables
  [   +1     -3     +3     +1     +2]   5   4.899  Q*N^3*D*a^2/(H^3)     FAIL (b): 3 non-repeating variables
  [   +2     -3     +2     -2     +2]   5   5.000  Q^2*N^2*a^2/(H^3*D^2)  FAIL (b): 3 non-repeating variables
  [   +1     -2     +4     +2     -1]   5   5.099  Q*N^4*D^2/(H^2*a)     FAIL (b): 3 non-repeating variables
  [   +1     -3     +1     -1     +4]   5   5.292  Q*N*a^4/(H^3*D)       FAIL (b): 3 non-repeating variables
  [   +1     +2     -1     -3     -4]   5   5.568  Q*H^2/(N*D^3*a^4)     FAIL (b): 3 non-repeating variables
  [   +1     -1     +4     +2     -3]   5   5.568  Q*N^4*D^2/(H*a^3)     FAIL (b): 3 non-repeating variables
  [   +1     -1     -2     -4     +3]   5   5.568  Q*a^3/(H*N^2*D^4)     FAIL (b): 3 non-repeating variables
  [   +1     -2     -1     -3     +4]   5   5.568  Q*a^4/(H^2*N*D^3)     FAIL (b): 3 non-repeating variables
  [   +1     -3     +4     +2     +1]   5   5.568  Q*N^4*D^2*a/(H^3)     FAIL (b): 3 non-repeating variables
  [   +2     -3     +1     -3     +3]   5   5.657  Q^2*N*a^3/(H^3*D^3)   FAIL (b): 3 non-repeating variables
  [   +3     -2     +2     -4     -1]   5   5.831  Q^3*N^2/(H^2*D^4*a)   FAIL (b): 3 non-repeating variables
  [   +1     +2     -2     -4     -3]   5   5.831  Q*H^2/(N^2*D^4*a^3)   FAIL (b): 3 non-repeating variables
  [   +3     -2     +3     -3     -2]   5   5.916  Q^3*N^3/(H^2*D^3*a^2)  FAIL (b): 3 non-repeating variables
  [   +3     -1     +2     -4     -3]   5   6.245  Q^3*N^2/(H*D^4*a^3)   FAIL (b): 3 non-repeating variables
  [   +3     -3     +4     -2     -1]   5   6.245  Q^3*N^4/(H^3*D^2*a)   FAIL (b): 3 non-repeating variables
  [   +3     -3     +2     -4     +1]   5   6.245  Q^3*N^2*a/(H^3*D^4)   FAIL (b): 3 non-repeating variables
  [   +2     -4     +3     -1     +3]   5   6.245  Q^2*N^3*a^3/(H^4*D)   FAIL (b): 3 non-repeating variables
  [   +3     -2     +4     -2     -3]   5   6.481  Q^3*N^4/(H^2*D^2*a^3)  FAIL (b): 3 non-repeating variables
  [   +1     -4     +3     +1     +4]   5   6.557  Q*N^3*D*a^4/(H^4)     FAIL (b): 3 non-repeating variables
  [   +3     -1     +3     -3     -4]   5   6.633  Q^3*N^3/(H*D^3*a^4)   FAIL (b): 3 non-repeating variables
  [   +3     -4     +4     -2     +1]   5   6.782  Q^3*N^4*a/(H^4*D^2)   FAIL (b): 3 non-repeating variables
  [   +1     -4     +4     +2     +3]   5   6.782  Q*N^4*D^2*a^3/(H^4)   FAIL (b): 3 non-repeating variables
  [   +3     -4     +3     -3     +2]   5   6.856  Q^3*N^3*a^2/(H^4*D^3)  FAIL (b): 3 non-repeating variables
  [   +3     -4     +2     -4     +3]   5   7.348  Q^3*N^2*a^3/(H^4*D^4)  FAIL (b): 3 non-repeating variables
  [   +4     -3     +4     -4     -2]   5   7.810  Q^4*N^4/(H^3*D^4*a^2)  FAIL (b): 3 non-repeating variables

  Vectors passing both conditions: 3
  These are the Buckingham basis (repeating variables N, D):

    Ma    N*D/(a)
    psi   H/(N^2*D^2)
    phi   Q/(N*D^3)

  CONCLUSION — what the SVD eigenvectors tell us:
  ┌──────────────────────────────────────────────────────────────┐
  │  V_ker_rec columns  -> a BASIS for ker(alpha^T).             │
  │  They let us:                                                │
  │    • project any a and check if it is a valid Pi group       │
  │    • compute a complete set of Pi coordinates                │
  │    • reconstruct the performance map psi = f(phi, Ma)        │
  │  They do NOT directly give phi, psi, Ma as labelled groups,  │
  │  because the natural Pi groups are not orthogonal.           │
  │                                                              │
  │  To re-discover phi, psi, Ma by name:                        │
  │    1. Integer lattice search  -> 69 primitive vectors        │
  │    2. Sort by (l0, l2)        -> sparsest groups first       │
  │    3. Repeating-variable filter (R = {N,D}):                 │
  │         (a) must involve N or D                              │
  │         (b) must involve exactly one of {Q, H, a}            │
  │       -> exactly recovers phi, psi, Ma as the named groups.  │
  └──────────────────────────────────────────────────────────────┘


=================================================================
STEP 4 (cont.) — Pi-GROUP COORDINATE VERIFICATION
=================================================================

  Within each cluster, Pi groups must equal the operating-point values:
    phi :  max rel err = 3.17e-16,  mean = 5.14e-17
    psi :  max rel err = 4.33e-16,  mean = 5.51e-17
    Ma  :  max rel err = 2.09e-16,  mean = 1.20e-17

  Correlation of recovered Pi groups with true Pi groups (|r|):
    Pi_rec_1:  phi: 0.141778,   psi: 0.920991,   Ma: 0.268934
    Pi_rec_2:  phi: 0.433469,   psi: 0.007527,   Ma: 0.953048
    Pi_rec_3:  phi: 0.992281,   psi: 0.365590,   Ma: 0.239551

=================================================================
STEP 5 — INTRINSIC DIMENSIONALITY OF THE PERFORMANCE MAP
=================================================================

  Strategy: compare how well psi can be predicted from phi ALONE (1-input)
  versus from (phi, Ma) together (2-input).  If psi = f(phi, Ma) the 1-input
  fit will be poor and the 2-input fit will be near-perfect.

  Training set (400 operating points), polynomial degree 5:
    1-input  psi ~ poly(phi)        R2 = 0.843034  <- poor  (Ma missing)
    2-input  psi ~ poly(phi, Ma)    R2 = 1.000000  <- excellent

  => the performance map needs BOTH phi and Ma as inputs (2D structure).

  Linear subspace approximation in log-Pi space:
    1D approximation: relative Frobenius error = 54.59%
    2D approximation: relative Frobenius error = 34.41%
  (Residual from 2D reflects curvature of the manifold, not extra Pi groups.)

=================================================================
STEPS 6-7 — SURROGATE FIT AND HELD-OUT VALIDATION
=================================================================

  Polynomial degree : 5  (21 features in (phi, Ma) space)
  R2 training       : 0.99999998
  R2 test           : 0.99999999
  Test mean rel err : 0.005%  (psi > 0.01 only)
  Test max  rel err : 0.240%


=================================================================
SUMMARY
=================================================================

  Variables : ['Q', 'H', 'N', 'D', 'a']  (k=5, n=2 base units, 3 Pi groups)
  Data      : 400 operating points x 40 gauge realisations = 16000 rows

  Step 3 — Within-cluster SVD (gauge variation):
    Singular values : [794.2452 276.989    0.       0.       0.    ]
    Gap at index 2: sigma_2/sigma_3 = 276.9890 / 0.000000
    Canonical angles to true ker(alpha^T): [0. 0. 0.] degrees
    -> ker(alpha^T) recovered to machine precision from data alone.

  Step 3 (cont.) — Integer lattice search + repeating-variable filter:
    Primitive vectors found (m_max=4): 69
    Sorted by (l0, l2); filter with R={N,D}, conditions (a) and (b):
    -> exactly recovers phi, psi, Ma as the Buckingham basis.

  Step 5 — Intrinsic dimensionality:
    1-input  psi ~ poly(phi)      R2 = 0.843034  <- insufficient
    2-input  psi ~ poly(phi, Ma)  R2 = 1.000000  <- excellent
    -> 2 free Pi inputs (phi, Ma) confirmed.

  Step 6 — Performance map reconstruction:
    Polynomial degree 5 in (phi, Ma): 21 features
    R2 train = 0.99999998
    R2 test  = 0.99999999
    Test mean relative error = 0.005% (psi > 0.01)

\end{lstlisting}
\end{document}